\newcommand{\cs}{{\mathcal S}}
\newcommand{\X}{{\mathcal X}}
\newcommand{\kr}{{\mathcal K}}
\newcommand{\cc}{{\mathbb C}}
\newcommand{\nn}{{\mathbb N}}
\newcommand{\eps}{{\varepsilon}}        
\newcommand{\bA}{\mathbf A}
\newcommand{\bS}{\mathbf S}
\newcommand{\bX}{\mathbf X}
\newcommand{\bY}{\mathbf Y}
\newcommand{\bU}{\mathbf U}
\newcommand{\bu}{\mathbf u}
\newcommand{\bD}{\mathbf D}
\newcommand{\eins}{{\mathbbm{1}}}
\newtheorem{theorem}{Theorem}
\newtheorem{definition}{Definition}
\newtheorem{lemma}{Lemma}
\newtheorem{remark}{Remark}
\newcommand{\tr}{\mathrm{tr}}
\newcommand{\supp}{\mathrm{supp}}
\DeclareMathOperator{\conv}{conv}
\DeclareMathOperator{\linspan}{span}
\DeclareMathOperator{\spec}{spec}
\begin{document}
\title{The Classical-Quantum Channel with Random State Parameters Known to the Sender}
\author{Holger Boche$^{1}$, Ning Cai$^{2}$, Janis N\"otzel$^{1,3}$ \\
\scriptsize{Electronic addresses: boche@tum.de, caining@mail.xidian.edu.cn, janis.noetzel@tum.de}
\vspace{0.2cm}\\
{\footnotesize $^{1}$Lehrstuhl f\"ur Theoretische Informationstechnik, Technische Universit\"at M\"unchen,}\\
{\footnotesize 80290 M\"unchen, Germany}\\ \ \\
{\footnotesize $^{2}$The State Key Laboratory of Integrated Services Networks,}\\
{\footnotesize Xidian University, Xi’an 710071, China}\\ \ \\
\footnotesize{$^{3}$F\'{\i}sica Te\`{o}rica: Informaci\'{o} i Fen\`{o}mens Qu\`{a}ntics, Universitat Aut\`{o}noma de Barcelona,}\\
\footnotesize{ES-08193 Bellaterra (Barcelona), Spain}
}
\maketitle

\begin{abstract}
We study an analog of the well-known Gel'fand Pinsker Channel which uses quantum states for the transmission of the data. We consider the case where both the sender's inputs to the channel and the channel states are to be taken from a finite set (cq-channel with state information at the sender).\\
We distinguish between causal and non-causal channel state information at the sender. The receiver remains ignorant, throughout. We give a single-letter description of the capacity in the first case. In the second case we present two different regularized expressions for the capacity. It is an astonishing and unexpected result of our work that a simple change from causal to non-causal channel state information at the encoder causes the complexity of a numerical computation of the capacity formula to change from trivial to seemingly difficult. Still, even the non-single letter formula allows one to draw nontrivial conclusions, for example regarding continuity of the capacity with respect to changes in the system parameters.\\
The direct parts of both coding theorems are based on a special class of POVMs which are derived from orthogonal projections onto certain representations of the symmetric groups. This approach supports a reasoning that is inspired by the classical method of types. In combination with the non-commutative union bound these POVMs yield an elegant method of proof for the direct part of the coding theorem in the first case.
\end{abstract}
\begin{section}{Introduction}
We investigate an information transmission problem where a sender (Alice) wants to reliably transmit messages to a receiver (Bob) under the influence of a noisy environment. The problem statement itself is rather generic in information theory, and has been addressed in many publications so far. The specific situation that we investigate here is one where the sender has advanced knowledge as compared to the receiver. This model was first introduced in the case of causal state knowledge by Shannon \cite{shannon-causal} who also derived a single-letter capacity formula and then extended to the case of non-causal state information by Gel'fand and Pinsker in \cite{gelfand-pinsker}.\\
Later, Costa \cite{costa} developed the widely known method ``writing on dirty paper'' which makes the ideas of Gel'fand and Pinsker also practically useful. Another practically important technique which is based on the work of Gel'fand and Pinsker is \cite{weingarten-steinberg-shamai}. Their model has also been extended to quantum systems and a coding theorem for entanglement assisted message transmission has been proven in \cite{dupuis}.\\
We concentrate here on a version of coding with (partial) state knowledge where the channel output is a quantum system, while the input system is a classical system. We restrict to classical input variables such that the optimization gets restricted to the right choice of code words at the encoder plus a positive operator valued measurement (POVM) for the decoding at Bob's site. There are many equivalent ways to write down the model but we will confine ourselves here to a version where the channel $W_{\bS\times\bX\to\kr}$ has input alphabets $\bS$, $\bX$ and the output quantum system is modelled on the finite dimensional Hilbert space $\kr$. Throughout we assume that $|\bX|,|\bS|<\infty$ and that the inputs $s\in\bS$ (the channel states) are selected at random according to some distribution $p\in\mathfrak P(\bS)$. Both sender and receiver get to know $p$. While we generally assume that the outcomes of the random process are revealed to the encoder prior to the start of message transmission, we consider two different scenarios here: One where this knowledge is \emph{non-causal} in the sense that, over $n\in\nn$ transmissions over the same memoryless channel and under i.i.d. selection for the channel states $s$ the sender can make his encoding dependent on the whole sequence $s^n=(s_1,\ldots,s_n)$ and a second situation where to any given message $m$ the components $x_1(m),\ldots, x_{i}(m)$ of the corresponding code word $x^n(m)$ can only depend on the $s_1,\ldots,s_i$ but not on $s_{i+1},\ldots,s_n$. Throughout, the receiver has no direct knowledge about the realization $s^n$, although it may generally be possible for him to obtain such knowledge by suitable measurements. We will however not study such tasks in this work but rather stay focused on the task of message transmission.\\
As indicated already we assume the channel itself to be memoryless and the choice of state sequences is i.i.d. according to $p$.
\\\\
We provide a single-letter coding theorem for the case where state information is available only causally and a multi-letter coding theorem for the case where state information is non-causal. We note that this is a somewhat unsatisfactory situation - originally, the main success of information theory was to reduce a seemingly intractable and highly complex problem (finding the supremum over all achievable message transmission rates for a given memoryless channel) to a simple convex optimization problem. Since then, the capacity of an information transmission system could be calculated easily and it was possible to use the capacity as a benchmark for coding strategies.\\
While working on this problem, we noted that the last decade has seen numerous examples of information transmission systems which do at present not admit a single-letter description. Rather, the currently available capacity formulae often require to calculate the limit of a sequence of numbers which are each the result of a convex optimization problem:
\begin{align}
C(a_1,\ldots,a_N)=\lim_{n\to\infty}\max_{(b_1,\ldots,b_N)\in PLT}C^{(n)}(b_1,\ldots,b_N,a_1,\ldots,a_d),
\end{align}
where $a_1,\ldots,a_d$ are parameters describing the information carrier and $PLT\subset V$ is a problem specific (convex) subset of some $N-dimensional$ vector space $V$. Examples came especially from the area of quantum information and can be broadly separated into two bins: One where the capacity of the cq channel (corresponding to the model which is treated here when $|\bS|=1$) is treated with and without additional constraints like e.g. secrecy and one where the entanglement transmission or generation capacity of quantum channels is investigated. Of course there are many more things one can do with a quantum channel but the last two areas show some interesting features: They are sufficiently close to the model treated here by us, they are related to one another through the work \cite{De05} and they illustrate the difficulties in finding single-letter capacity formulae.\\
While first steps in classical information theory were enormously successful (like for example Shannon's pioneering work \cite{shannon-comm}), already the search for a single-letter capacity formula for the zero-error capacity led to severe problems (Shannon was only able to obtain single-letter lower bounds on the capacity in that case. He conjectured that the zero-error capacity is additive \cite{shannon-zero} in 1956, a conjecture that was disproven by Alon 42 years later in 1998 \cite{alon-non-additive}). Apart from such remarkable stories, classical information theory does by now contain an abundance of partial results on seemingly trivial problems, for example: Ahlswede's work \cite{ahlswede-interference} gives a non-single letter formula for the capacity region of the interference channel, one of the core problems of classical information theory. The celebrated works \cite{han-kobayash} of Han and Kobayashi and \cite{marton} of Marton provide single-letter lower bounds on the capacity region. Even when it comes to simpler problems involving only three parties we encounter this type of problem, for example for the wiretap channel with feedback \cite{ahlswede-cai-feedback}, with side information \cite{chen-vinck} or the compound wiretap channel \cite{liang-kramer-poor-shamai}. For the arbitrarily varying channel under the maximal error criterion with feedback, a non-single letter capacity formula could be given by Ahlswede and Cai in \cite{ahlswede-cai-identification}. For the arbitrarily varying wiretap channel, only a single-letter lower bound \cite{bbs-capacity} could be given. In many cases, the single-letter lower bounds can be alternatively represented as a regularized capacity formula.\\
We now concentrate on results in quantum information theory again: The capacity of the cq channel has been determined in \cite{holevo} and in \cite{schumacher-westmoreland}. Prior to that it had been an open problem for more than 20 years after the work \cite{holevo-problems}. At that time it was even unclear whether it was additive or not. The entanglement transmission capacity of quantum channels has been determined in \cite{bkn,bns,De05,shor,schumacher-westmoreland}. It was proven later \cite{smith-yard-superactivation,hastings} that it is not additive and even shows super-activation. Recent results in classical information theory \cite{bs,wiese-noetzel-boche-I,noetzel-wiese-boche-II} show that such effects may even occur for classical systems with an eavesdropper or, more generally, when the number of available resources which may or may not be used jointly and which may or may not be available to some of the parties becomes large enough.\\
In the comparison of our coding theorem with other results in quantum information we became aware of the fact that the (strong) secrecy capacity of a system with fixed signal states and two quantum receivers, one being the legal receiver and the other an illegitimate eavesdropper, is given by a multi-letter formula as well \cite{De05,cai-winter-yeung}. Moreover, the secrecy capacity of certain classical-quantum wiretap channels and the entanglement generation capacity of a quantum channel are related via the work \cite{De05} of Devetak, who was able to derive a way of turning a quantum channel into a cq-channel with one legal and one illegitimate receiver. He then showed how to transform private codes for the cq channel into entanglement generation codes for the quantum channel.\\
We also noted that the problems \cite{wiese-noetzel-boche-I,noetzel-wiese-boche-II,De05,cai-winter-yeung} have one thing in common: They are generalizations of information-theoretic problems where the known proofs of the converse parts use Csiszar's sum identity.\\
Despite the lack of efficiency and elegance of a regularized expression of a capacity the recent work \cite{bn-positivity} was the first to demonstrate that nontrivial insights may be gained even from a regularized expression: In \cite{bn-positivity} it was proven that the message transmission capacity of an arbitrarily varying channel with quantum input for the sender and quantum output system at the receivers side is not continuous in general, but is \emph{always} continuous if assisted by a small (private) amount of shared randomness between sender and receiver. In addition to that, \cite{bn-positivity} gives exact conditions under which discontinuities arise and characterizes them in terms of functions which are continuous themselves, although they are not given in a single-letter form.\\
Coming back to classical systems we note that the capacity of the Gel'fand Pinsker channel (our model with non-causal information given to the sender and a channel satisfying $[W_{s}(x),W_{s'}(x')]=0$ for all $s,s'\in\bS$ and $x,x'\in\bX$ where $[\cdot,\cdot]$ denotes the commutator of the respective quantum states) has been given a single-letter form in the pioneering work \cite{gelfand-pinsker} but that a trivial capacity formula could be derived by taking the respective formula for the case with causal information at the sender and then regularizing it.\\
We follow this route in our work at least partially when it comes to proving the converse, although we are also able to give a different characterization which pays more attention to the specific structure of the problem as well. The direct part of our coding theorem for the case of non-causal information is based on an approach that was developed in \cite{noetzel-hypothesis}. This approach is slightly closer to what is known classically as a ``method of types'' than previously used approaches in quantum information were. Such approaches include for example \cite{hsieh-devetak-winter,hayashi-representation-theory,harrow-diss,christandl-diss}.
\end{section}
\begin{section}{\label{sec:Notation}Notation}
All Hilbert spaces are assumed to have finite dimensions and are over the field $\cc$. The set of linear operators from $\kr$ to $\kr$ is denoted $\mathcal B(\kr)$. The adjoint of $b\in\mathcal B(\kr)$ is marked by a star and written $b^\ast$.\\
$\cs(\kr)$ is the set of states, i.e. positive semi-definite operators with trace (the trace function on $\mathcal B(\kr)$ is written as $\tr$) $1$ acting on the Hilbert space $\kr$. Pure states are given by projections onto one-dimensional subspaces. A vector $x\in\kr$ of length one spanning such a subspace will therefore be referred to as a state vector, the corresponding state is typically written as
$|x\rangle\langle x|$ and will, due to lengthy formulas, be abbreviated as $\psi_x$ in this document. A classical-quantum channel (cq-channel) with input alphabet $\bX$ and output system $\cs(\kr)$ is a map that assigns to each element $x\in\bX$ a corresponding quantum state $\rho_x\in\cs(\kr)$. The set of all such maps is abbreviated $Cq(\bX,\kr)$.\\
For a finite set $\mathbf X$ the notation $\mathfrak{P}(\mathbf X)$ is reserved for the set of probability distributions on $\mathbf X$, and $|\mathbf X|$ denotes its cardinality. The set $\mathfrak P(\bX)$ can be embedded into a Hilbert space of dimension $|\bX|$ by choosing any set $\{\psi_x\}_{x\in\bX}$ of pairwise orthogonal rank-one states and mapping each $p\in\mathfrak P(\bX)$ to $\sum_{x\in\bX}p(x)\psi_x$. We will use this kind of embedding in the converse parts of our proofs in order to deliver a consistent connection to standard estimates in quantum information theory. Given two alphabets $\bX$ and $\bY$ we will sometimes denote elements of $\mathfrak P(\bX\times\bY)$ by e.g. $p_{\bX\bY}$, and in that case it is understood that $p_\bX\in\mathfrak P(\bX)$ and $p_\bY\in\mathfrak P(\bY)$ denote the respective marginal distributions of $p_{\bX\bY}$.\\
The set of channels (stochastic matrices) from an alphabet $\bX$ to another alphabet $\bY$ is written $Ch(\bX,\bY)$. Its elements $V$ map any given symbol $x\in\bX$ to the symbol $y\in\bY$ with probability $v(y|x)\in[0,1]$.\\ For any $n\in\nn$, we define $\bX^n:=\{(x_1,\ldots,x_n):x_i\in\bX\ \forall i\in\{1,\ldots,n\}\}$, we also write $x^n$ for the elements of $\bX^n$. Given such element, $N(\cdot|x^n)$ denotes its type, and is defined through $N(x|x^n):=|\{i:x_i=x\}|$. Normalized types are defined as $\bar N(x|x^n):=\tfrac{1}{n}N(x|x^n)$ for all $x^n\in\bX^n$ and $x\in\bX$. For any natural number $n\in\nn$, the notion of type defines a subset $\mathfrak P_0^n(\bX)\subset\mathfrak P(\bX)$ via $\mathfrak P_0^n(\bX):=\{\bar N(\cdot|x^n):x^n\in\bX^n\}$.\\
For any natural number $L$, we define $[L]$ to be the shortcut for the set $\{1,...,L\}$.\\
The von Neumann entropy of a state $\rho\in\mathcal{S}(\kr)$ is given by
\begin{equation}S(\rho):=-\textrm{tr}(\rho \log\rho),\end{equation}
where $\log(\cdot)$ denotes the base two logarithm which is used throughout this work.\\
Given a cq-channel $W\in Cq(\bX,\kr)$ and a probability distribution $p\in\mathfrak P(\bX)$, the Holevo quantity of $p$ and $W$ is defined as
\begin{align}
\chi(p,W):=S(\sum_{x\in\bX}p(x)\rho_x)-\sum_{x\in\bX}p(x)S(\rho_x).
\end{align}
Given two states $\rho,\sigma\in\cs(\mathbb C^d)$, the relative entropy of them is defined as
\begin{align}
D(\rho\|\sigma):=\left\{\begin{array}{l l}\tr\{\rho(\log(\rho)-\log(\sigma)\},&\mathrm{if}\ \supp(\rho)\subset\supp(\sigma),\\\infty,&\mathrm{else} \end{array}\right.
\end{align}
Another way of measuring distance between quantum states is obviously given by using the one-norm, which obeys:
\begin{align}
\|\rho-\sigma\|:=2\max_{0\leq P\leq\eins}\tr\{P(\rho-\sigma)\}
\end{align}
We now fix our notation for representation theoretic objects and state some basic facts.\\
The symbols $\lambda,\mu$ will be used to denote Young frames. The set of Young frames with at most $d\in\nn$ rows and $n\in\nn$ boxes is denoted $\mathbb Y_{d,n}$.\\
For any given $n$, the representation of $S_n$ we will consider is the standard representation on $(\mathbb C^d)^{\otimes n}$ that acts by permuting tensor factors. Throughout, the dimension $d$ of our basic quantum system will remain fixed.\\
The unique complex vector space carrying the irreducible representation of $S_n$ corresponding to a Young Tableau $\lambda$ will be written $F_\lambda$.\\
The multiplicity of an irreducible subspace of our representation corresponding to a Young frame $\lambda$ is denoted $m_{\lambda,n}$, and this quantity can be upper bounded by $m_{\lambda,n}\leq(2n)^{d^2}$ (see \cite{christandl-diss}).\\
For $\lambda\in\mathbb Y_{d,n}$, $\bar\lambda\in\mathfrak P([d])$ is defined by $\bar\lambda(i):=\lambda_i/n$. If $\rho\in\cs(\mathbb C^d)$ has spectrum $s\in\mathfrak P([d])$ (in case that $\rho$ has degenerate eigenvalues we count them multiple times!), then it
will always be assumed that $s(1)\geq\ldots\geq s(d)$ holds and the distance between a spectrum $s$ and a Young frame $\lambda\in\mathbb Y_{d,n}$ is measured by $\|\bar\lambda-s\|:=\sum_{i=1}^d|\bar\lambda(i)-s(i)|$. The distance between two probability distributions $p,q\in\mathfrak P([d])$ will be measured by $\|p-q\|:=\sum_i|p(i)-q(i)|$.\\
A positive operator-valued measurement (POVM) $\bD$ on a Hilbert space $\kr$ is given by a collection $\bD=\{D_m\}_{m=1}^M\subset\mathcal B(\kr)$ of non-negative operators that sum up to the identity: $\sum_{m=1}^MD_m=\eins_\kr$.\\
The Kostka numbers $K_{f,\lambda}$ are as defined in e.g. Fulton's book \cite{fulton}, pages 25-26.\\
We now define two important entropic quantities. Given a finite set $\bX$ and two probability distributions $r,s\in\mathfrak P(\bX)$, we define the relative entropy $D(r||s)$ by
\begin{align}
D(r||s):=\left\{\begin{array}{ll}\sum_{x\in \bX}r(x)\log(r(x)/s(x)),&\mathrm{if}\ s\gg r\\ \infty,&\mathrm{else}\end{array}\right.
\end{align}
In case that $D(r||s)=\infty$, for a positive number $a>0$, we use the convention $2^{-aD(r||s)}=0$. The relative entropy is connected to $\|\cdot\|$ by Pinsker's inequality $D(r||s)\geq\alpha\|r-s\|^2$, where $\alpha:=1/2\ln(2)$.\\
The  entropy of $r\in\mathfrak P(\bX)$ is defined by the formula
\begin{align}
H(r):=-\sum_{x\in \bX}r(x)\log(r(x)).
\end{align}
During proofs, we will be having one fixed state $\sigma\in\cs(\mathbb C^d)$ (this will be the average output state) having a (non-unique) decomposition $\sigma=\sum_{i=1}^dt_i|\tilde e_i\rangle\langle\tilde e_i|$ and the pinching of an arbitrary state $\rho\in\cs(\mathbb C^d)$ (one of the channel output states $\rho_x$) to the orthonormal basis $\{\tilde e_i\}_{i=1}^d$ will be given by $\sum_{i=1}^d|\tilde e_i\rangle\langle\tilde e_i|\rho|\tilde e_i\rangle\langle\tilde e_i|$ and induces the probability distribution $\tilde r_\rho\in\mathfrak P([d])$ through $\tilde r_\rho(i):=\langle\tilde e_i,\rho\tilde e_i\rangle$. It is important for the understanding of this paper to keep in mind that the equality $D(\rho\|\sigma)=-H(\mathrm{spec}(\rho))-\sum_{i=1}^d\tilde r_\rho(i)\log(t_i)$ holds.\\
We also need the notion of a convex hull. This is e.g. defined in \cite{webster}. For a subset $B\subset\mathbb R^n$ or $B\subset\mathbb C^n$ we denote its convex hull by $\conv(B)$.\\
\end{section}
\begin{section}{\label{sec:Definitions}Definitions and preliminary results}
The direct part of our work is based on the preceding results \cite{noetzel-hypothesis}. We give a short review of the basic ideas utilized there. Let $n\in\nn$ be fixed for the moment. The most important technical definition for this work is that of frequency-typical subspaces $V_f$ of $(\mathbb C^d)^{\otimes n}$. These arise from choosing a fixed orthonormal basis $\{e_i\}_{i=1}^d$ of $\mathbb C^d$, choosing a frequency $f$ (a function $f:[d]\to\mathbb N$ satisfying $\sum_{i=1}^df(i)=n$), setting $T_f:=\{(i_1,\ldots,i_n):|\{i_k:i_k=j\}|=f(j)\ \forall j\in[d]\}$, and defining
\begin{align}\label{eqn:def-of-Vf}
V_f:=\linspan(\{e_{i_1}\otimes\ldots\otimes e_{i_n}:(i_1,\ldots,i_n)\in T_f\}).
\end{align}
They have been widely used in quantum information theory, but share one very nice property that does not seem to have been exploited yet: They are invariant under permutations. From this property it immediately follows that
\begin{align}
V_f=\bigoplus_\lambda V_{f,\lambda},
\end{align}
where each $V_{f,\lambda}$ is just a direct sum of irreducible representations corresponding to $\lambda$ that is contained entirely within $V_f$.\\
A fundamental representation theoretic quantity which is intimately connected to them are the Kostka numbers. In fact, it holds $K_{f,\lambda}=0\ \Leftrightarrow\ V_{f,\lambda}=\{0\}$, both by definition of the Kostka numbers and by application of Young symmetrizers as described in \cite{sternberg}, pages 254-258.\\
Also, we are going to employ the following estimate taken from \cite[Lemma 2.3]{csiszar-koerner} (see equation \eqref{eqn2})), which is valid for all frequencies $f:[d]\to\nn$ that satisfy $\sum_{i=1}^df(i)=n$:
\begin{equation}
\frac{1}{(n+1)^d}2^{nH(\overline{f})}\leq|T_f|\leq2^{nH(\overline{f})}\label{eqn2}
\end{equation}
We will also need Lemma 2.7 from \cite{csiszar-koerner}:
\begin{lemma}\label{lemma1}
If, for $\mathbf A$ a finite alphabet and $p,q\in\mathfrak P(\mathbf A)$ we have $|p-q|\leq\Theta\leq1/2$, then
\begin{equation}
 |H(p)-H(q)|\leq-\Theta\log\frac{\Theta}{|\mathbf A|}.
\end{equation}
\end{lemma}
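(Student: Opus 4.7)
The plan is to reduce the inequality to a pointwise estimate for $\eta(t):=-t\log t$ (with the convention $\eta(0)=0$) and then apply Jensen's inequality. Writing $H(p)-H(q)=\sum_{x\in\mathbf A}(\eta(p(x))-\eta(q(x)))$ and applying the triangle inequality reduces the problem to bounding $\sum_{x\in\mathbf A}|\eta(p(x))-\eta(q(x))|$. The key pointwise claim to establish is that for $a,b\in[0,1]$ with $\delta:=|a-b|\leq 1/2$, one has $|\eta(a)-\eta(b)|\leq\eta(\delta)$; granting it, one obtains $|H(p)-H(q)|\leq\sum_{x\in\mathbf A}\eta(|p(x)-q(x)|)$.

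To prove the pointwise claim I would proceed as follows. Assume without loss of generality $0\leq a\leq b\leq 1$ and study $g(x):=\eta(x+\delta)-\eta(x)$ on $[0,1-\delta]$. A direct computation gives $g'(x)=\log(x/(x+\delta))<0$, so $g$ is strictly decreasing with boundary values $g(0)=\eta(\delta)$ and $g(1-\delta)=-\eta(1-\delta)$ (using $\eta(1)=0$). Hence $|g(x)|\leq\max(\eta(\delta),\eta(1-\delta))$. A short analysis of $h(\delta):=\eta(\delta)-\eta(1-\delta)$ on $[0,1/2]$ (it vanishes at both endpoints, $h'(0^+)=+\infty$, and it has a single interior critical point) shows $\eta(1-\delta)\leq\eta(\delta)$ on this range, which completes the claim.

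For the final step, set $\delta_x:=|p(x)-q(x)|$ so that $\sum_x\delta_x=|p-q|\leq\Theta\leq 1/2$. Concavity of $\eta$ combined with Jensen's inequality yields
\begin{align*}
\sum_{x\in\mathbf A}\eta(\delta_x)\leq|\mathbf A|\,\eta\Bigl(\tfrac{1}{|\mathbf A|}\sum_{x\in\mathbf A}\delta_x\Bigr).
\end{align*}
Since the argument is at most $\Theta/|\mathbf A|\leq 1/(2|\mathbf A|)\leq 1/4<1/e$ (we may assume $|\mathbf A|\geq 2$, the case $|\mathbf A|=1$ being trivial), monotonicity of $\eta$ on $[0,1/e]$ upgrades this bound to $|\mathbf A|\,\eta(\Theta/|\mathbf A|)=-\Theta\log(\Theta/|\mathbf A|)$, which is exactly the assertion.

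The main obstacle is the pointwise lemma: because $\eta$ is not monotone on $[0,1]$, one cannot simply bound $|\eta(a)-\eta(b)|$ by $\eta(\delta)$ via monotonicity, and the comparison $\eta(1-\delta)\leq\eta(\delta)$ on $[0,1/2]$ needed to handle the upper endpoint of $g$ is not a direct consequence of concavity alone. Once this calculus lemma is settled, the global estimate follows from a routine combination of the triangle inequality, Jensen's inequality, and monotonicity of $\eta$ near the origin.
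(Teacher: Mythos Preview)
Your proof is correct. Note, however, that the paper does not actually prove this lemma: it is quoted verbatim as Lemma~2.7 of Csisz\'ar--K\"orner \cite{csiszar-koerner} and invoked without further argument. The proof you have written is precisely the standard one given in that reference: the decomposition $H(p)-H(q)=\sum_x(\eta(p(x))-\eta(q(x)))$, the pointwise estimate $|\eta(a)-\eta(b)|\leq\eta(|a-b|)$ for $|a-b|\leq 1/2$ (established via the monotonicity of $g(x)=\eta(x+\delta)-\eta(x)$ together with the comparison $\eta(1-\delta)\leq\eta(\delta)$ on $[0,1/2]$), and the concluding application of Jensen's inequality combined with the monotonicity of $\eta$ on $[0,1/e]$. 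All of the small calculus claims you flag (the sign of $g'$, the single interior critical point of $h(\delta)=\eta(\delta)-\eta(1-\delta)$ on $[0,1/2]$, and the location of $\Theta/|\mathbf A|$ in the increasing range of $\eta$) are readily verified as you indicate, so there is no gap.
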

Another very important estimate is the following one (a derivation can e.g. be found in \cite{noetzel-2ptypicality}):
\begin{equation}
 2^{n(H(\bar\lambda)-\frac{2d^6}{n}\log(2n))}\leq\dim F_\lambda\leq 2^{nH(\bar\lambda)}\qquad (\lambda\in\mathbb Y_{d,n}).\label{eqn4}
\end{equation}
Let $\bA$ be any finite set. For every $\delta>0$, $p\in\mathfrak P(\bA)$ and $n\in\nn$, we set $T_{p,\delta}^n:=\{a^n\in\bA^n:\|p-\bar N(\cdot|a^n)\|\leq\delta\}$. It is a well-known fact (see e.g. \cite{csiszar-koerner} or, if more notational compliance is desired, \cite{noetzel-wiese-boche-II} or \cite{wiese-noetzel-boche-I}) that this definition implies that for all large enough $n\in\nn$ we will have
\begin{align}\label{eqn:fact-1}
p^{\otimes n}(T_{p,\delta}^n)\geq1-2^{-n\delta/2}.
\end{align}
\end{section}
\begin{section}{Operational Definitions}
We will in the following deal with classical-quantum channels that are dependent on an additional parameter $s$, called the 'channel state' or simply the state. Such channels will be denoted $W_{\bS\times\bX\to\kr}$. Here $\bX$ denotes the alphabet which is used by the sender to encode his messages into the quantum system, and $\bS$ denotes the possible channel states. Both sets are finite. The channel states are assumed to be selected according to some distribution $p$, and the selection of channel states over $n$ uses of the channel is assumed to be independent and identically distributed. As the channel is assumed to be memoryless as well, the whole system can be described by the pair $(W_{\bS\times\bX\to\kr},p)$, and we will use this notation henceforth. During the treatment of the problem it turns out to be useful to define additional channels which are derived from the original model by adding a randomized encoding $E\in Ch(\bU,\bX)$ which leads to a new cq-channel $W_{\bU\times\bS\to\kr}$ defined by the states
\begin{align}
\rho_{s,u}:=\sum_{x\in\bX}e(x|u)\rho_{s,x}.
\end{align}
\begin{definition}[Non-causal code]
A code $\mathcal K_n$ (for $n$ channel uses) consists of a natural number $M_n$, a stochastic map $E\in Ch([M_n]\times\bS^n,\bX^n)$ together with a decoding POVM $\bD$ on $\kr^{\otimes n}$. The average error of the code is
\begin{align}
\mathrm{err}(\mathcal K_n):=1-\frac{1}{M_n}\sum_{m=1}^{M_n}\sum_{s^n\in\bS^n}p^{\otimes n}(s^n)\sum_{x^n\in\bX^n}e(x^n|m,s^n)\tr\{\rho_{s^n,x^n}D_m\}.
\end{align}
\end{definition}
\begin{definition}[Causal code]
A code $\mathcal K_n$ (for $n$ channel uses) consists of a natural number $M_n$ and a stochastic map $E\in Ch([M_n]\times\bS^n,\bX^n)$ that satisfies for every $t\in[n]$ the additional constraint that its marginal distributions $e_t(\cdot|m,s^n)\in\mathfrak P(\bX^t)$ which are defined by $e_t(x^t|m,s^n):=\sum_{(x_{t+1},\ldots,x_{n})}e(x^n|m,s^n)$ do only depend on $s^t$: There exist $\hat E_t\in C([M_n]\times\bS^t,\bX^n)$ such that for every $x^t\in\bX^t$ we have
\begin{align}
e_t(x^t|m,s^n)=\hat e_t(x^t|m,s^t).
\end{align}
A causal code further contains a decoding POVM $\bD=\{D_m\}_{m\in[M_n]}$ on $\kr^{\otimes n}$. The average error of the code is
\begin{align}
\mathrm{err}(\mathcal K_n):=1-\frac{1}{M_n}\sum_{m=1}^{M_n}\sum_{s^n\in\bS^n}p^{\otimes n}(s^n)\sum_{x^n\in\bX^n}e( x^n|m,s^n)\tr\{\rho_{s^n,x^n}D_m\}.
\end{align}
\end{definition}
We now define what achievable rates are:
\begin{definition}[Achievable rates] A number $R\geq0$ is called a non-causally achievable rate for $(W_{\bS\times\bX\to\kr},p)$ if there exists a sequence $(\mathcal K_n)_{n\in\mathbb N}$ of non-causal codes such that
\begin{align}
\lim_{n\to\infty}\mathrm{err}(\mathcal K_n)=1,\qquad\liminf_{n\to\infty}\frac{1}{n}\log(M_n)\geq R.
\end{align}
The number $R\geq0$ is called causally achievable if there is a sequence $(\mathcal K_n)_{n\in\nn}$ of causal codes such that
\begin{align}
\lim_{n\to\infty}\mathrm{err}(\mathcal K_n)=1,\qquad\liminf_{n\to\infty}\frac{1}{n}\log(M_n)\geq R.
\end{align}
\end{definition}
Naturally, this leads to the following two definitions of capacity:
\begin{definition}[Capacities] The non-causal capacity of $(W_{\bS\times\bX\to\kr},p)$ is
\begin{align}
C(W_{\bS\times\bX\to\kr},p):=\sup\left\{R:R\ \mathrm{is\ a\ non-causally\ achievable\ rate\ for}\ (W_{\bS\times\bX\to\kr},p)\right\}.
\end{align}
The causal capacity of $(W_{\bS\times\bX\to\kr},p)$ is
\begin{align}
C_c(W_{\bS\times\bX\to\kr},p):=\sup\left\{R:R\ \mathrm{is\ a\ causally\ achievable\ rate\ for}\ (W_{\bS\times\bX\to\kr},p)\right\}.
\end{align}
\end{definition}
\end{section}
\begin{section}{\label{sec:Main Results}Main Results}
Our main results are the following two coding theorems:
\begin{theorem}\label{theorem:main-result-II} Let $W_{\bS\times\bX\to\kr}$ be a classical-quantum channel. Let $p\in\mathfrak P(\bS)$. It holds
\begin{align}
C_{c}(W_{\bS\times\bX\to\kr},p)=\max_{q\in\mathfrak P(\bU)}\max_{V\in Ch_p(\bU,\bS\times\bX)}\chi(q,W_{\bS\times\bX\to\kr}\circ V)
\end{align}
where $Ch_p(\bU,\bS\times\bX):=\{V\in Ch(\bU,\bS\times\bX):\exists\ \tilde V\in Ch(\bS\times\bU,\bX):\ v(s,x|u)=\tilde v(x|s,u)p(s)\ \forall(s,u,x)\in\bS\times\bU\times\bX\}$ and the size of the alphabet $\bU$ may be bounded by $|\bU|\leq|\bX|\cdot|\bS|$.
\end{theorem}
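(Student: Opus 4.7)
My plan is to prove the causal coding theorem in the two standard directions. Achievability will reduce the problem, via a Shannon-strategy construction, to ordinary cq-channel coding over an averaged auxiliary channel, so that I can apply the Holevo-Schumacher-Westmoreland theorem (or equivalently the Young-symmetrizer/method-of-types construction from \cite{noetzel-hypothesis} that the paper emphasises elsewhere). The converse will run Fano's inequality plus Holevo's bound, then identify the right classical auxiliary variable so that the causality constraint translates directly into membership in $Ch_p$.

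\textbf{Achievability.} Fix $q\in\mathfrak P(\bU)$ and $V\in Ch_p(\bU,\bS\times\bX)$ with $v(s,x|u)=p(s)\tilde v(x|s,u)$. Form the averaged cq-channel $\tilde W\in Cq(\bU,\kr)$ by
\begin{align*}
\tilde\rho_u:=(W_{\bS\times\bX\to\kr}\circ V)(u)=\sum_{s,x}p(s)\tilde v(x|s,u)\rho_{s,x}.
\end{align*}
I would apply HSW to $\tilde W$ to obtain codewords $u^n(1),\ldots,u^n(M_n)\in\bU^n$ together with a POVM $\bD$ on $\kr^{\otimes n}$ whose rate approaches $\chi(q,\tilde W)=\chi(q,W_{\bS\times\bX\to\kr}\circ V)$ and whose average error on $\tilde W^{\otimes n}$ vanishes. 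From this I build a causal code: given message $m$, on observing $s_t$ at time $t$ I sample $x_t\sim\tilde v(\cdot|s_t,u_t(m))$ independently of the past. The encoding is manifestly causal, and the average channel output given $m$ is $\bigotimes_t\tilde\rho_{u_t(m)}$, which is exactly what Bob would see under $\tilde W^{\otimes n}$ with codeword $u^n(m)$, so $\bD$ decodes with the same error. Optimising over $(q,V)$ yields the lower bound.

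\textbf{Converse and obstacle.} For a code with vanishing error and uniform messages, Fano plus Holevo give $\log M_n\leq I(M;Y^n)+n\varepsilon_n$. I would set $U_t:=(M,S^{t-1},X^{t-1})$ with distribution $q_t$. Causality ensures that the marginal of $X^{t-1}$ depends only on $(M,S^{t-1})$, so $U_t$ is independent of $S_t$; the conditional of $(S_t,X_t)$ given $U_t=u_t$ therefore has the form $V_t(s,x|u_t)=p(s)\tilde v_t(x|s,u_t)\in Ch_p(\bU_t,\bS\times\bX)$. Conditional on $U_t$ the quantum system $Y^{t-1}$ is a deterministic state (it is a fixed function of $(S^{t-1},X^{t-1})\subset U_t$), so $I(Y^{t-1};Y_t|U_t)=0$. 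Then
\begin{align*}
I(M;Y^n)=\sum_{t=1}^nI(M;Y_t|Y^{t-1})\leq\sum_{t=1}^nI(U_t,Y^{t-1};Y_t)=\sum_{t=1}^nI(U_t;Y_t)=\sum_{t=1}^n\chi(q_t,W_{\bS\times\bX\to\kr}\circ V_t),
\end{align*}
and time-sharing via $T$ uniform on $[n]$, $U:=(T,U_T)$, combined with concavity of the von Neumann entropy, gives $\tfrac{1}{n}\log M_n\leq\chi(q_U,W_{\bS\times\bX\to\kr}\circ V)+\varepsilon_n\leq\max_{q',V'}\chi(q',W_{\bS\times\bX\to\kr}\circ V')+\varepsilon_n$. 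The cardinality bound $|\bU|\leq|\bS|\cdot|\bX|$ follows from Fenchel-Eggleston-Carathéodory: the $\bS\times\bX$-marginal of $q(u)V(s,x|u)$ has only $|\bS|(|\bX|-1)$ free parameters because the $\bS$-marginal is pinned to $p$, so preserving it together with the single scalar $\sum_u q(u)S(\bar\rho_u)$ preserves $\chi$ and costs at most $|\bS|(|\bX|-1)+1\leq|\bS|\cdot|\bX|$ support points. The main obstacle is the converse, and inside it the subtle step of choosing $U_t$ so that both $V_t\in Ch_p$ holds and $Y^{t-1}$ is ``classicalised'' by the conditioning, allowing the quantum past to be discarded without loss; the remaining ingredients (Fano plus Holevo, the quantum chain rule, time-sharing with concavity of $S$, Fenchel-Eggleston) are routine.
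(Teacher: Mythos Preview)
Your proposal is correct and follows the same architecture as the paper: achievability via Shannon strategies reducing to ordinary cq-channel coding over the averaged channel $\tilde\rho_u$, converse via Fano plus Holevo plus a chain-rule single-letterisation that exploits the product structure of $(Y^{t-1},Y_t)$ conditioned on the classical past, and a Carath\'eodory-type support reduction for the cardinality bound. The differences are cosmetic: you put $X^{t-1}$ into $U_t$ from the outset whereas the paper sets $U_i=(\mathfrak M_n,\mathfrak S^{i-1})$ and then argues $\mathfrak X^{i-1}$ can be absorbed; you time-share over $t$ whereas the paper picks the best index $i$; and your Fenchel--Eggleston count gives $|\bS|(|\bX|-1)+1$ support points while the paper's Carath\'eodory-style count gives $|\bS|(|\bX|-1)+2$, both comfortably within $|\bS|\cdot|\bX|$.
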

\begin{remark}
For each fixed $p\in\mathfrak P(\bS)$ and finite $\bU$, the set $Ch_p(\bU,\bS\times\bX)$ is convex. In fact, the optimization is running on $Ch(\bS\times\bU,\bX)$ (which is of course convex as well) and above way of stating the coding theorem for $C_c$ is just one of the shorter ways to write down the capacity formula, which would otherwise involve concatenated channels $\tilde V\otimes Id$ going from $(\bU\times\bS)\times\bS$ to $\bS\times\bX$ and being fed with a distribution $q_\bU\otimes p^{(2)}$, where $p^{(2)}\in\mathfrak P(\bS\times\bS)$ is defined via setting $p^{(2)}(s,s'):=p(s)\delta(s,s')$.\\
The convexity of the set over which the maximum is taken for a fixed $q\in\mathfrak P(\bU)$ together with convexity of Holevo information in the state set lets us conclude that for fixed $p$, the solutions to the optimization problem are to be found on the boundary of $Ch_p(\bU,\bS\times\bX)$. This boundary consists of channels $V$ for which $v(s,x|u)\in\{0,1\}$ for all $(s,u,x)\in\bS\times\bU\times\bX$. Such channels are in a one to one correspondence to functions $\varphi:\bS\times\bU\to\bX$ and therefore solutions to the optimizing problem take the form $v(s,x|u)=p(s)\delta(\varphi(s,u),x)$ where $\varphi:\bS\times\bU\to\bX$ is a function.
\end{remark}
We now come to the characterization of the non-causal capacity. Here, we are able to give three different characterizations, and unfortunately none of them is a single letter formula.
\begin{theorem}\label{theorem:main-result}
Let $W_{\bS\times\bX\to\kr}$ be a classical-quantum channel. Let $p\in\mathfrak P(\bS)$.
\begin{align}
C(W_{\bS\times\bX\to\kr},p)=\lim_{n\to\infty}\frac{1}{n}C_c(W_{\bS\times\bX\to\kr}^{\otimes n},p^{\otimes n}).
\end{align}
In addition to that we have, for every $n\in\nn$, every finite alphabet $\bU_n$ and setting $A_n:=\{q_{\bS^n\bU_n\bX^n}\in\mathfrak P(\bS^n,\bU,\bX^n): q_{\bS^n}=p^{\otimes n}\}$, that
\begin{align}\label{eqn:lower-bound-on-noncausal-capacity}
C(W_{\bS\times\bX\to\kr},p)\geq\frac{1}{n}\max_{q_{\bS^n\bU_n\bX^n}\in A_n}\left(\chi(p_{\bU_n},W_{\bU_n\to\kr^{\otimes n}})-I(U_n;S^n)\right),
\end{align}
where $\mathbb P((S^n,U_n,X^n)=(s^n,u,x^n))=q_{\bS^n\bU_n\bX^n}(s^n,u,x^n)$, and to every $q_{\bS^n\bU_n\bX^n}(s^n,u,x^n)\in A_n$ we define a corresponding $W_{\bU_n\to\kr^{\otimes n}}$ by setting, for every $u\in\bU_n$,
\begin{align}
W_{\bU_n\to\kr^{\otimes n}}(u):=\sum_{s^n\in\bS^n}\sum_{x^n\in\bX^n}p_{\bS^n\bX^n}(s^n,x^n|u)W_{\bS\times\bX\to\kr}^{\otimes n}(s^n,x^n).
\end{align}
The size of the alphabet $\bU_n$ in above optimization problem can, for every $n\in\nn$, be bounded by $|\bU_n|\leq(|\bS|\cdot2\cdot|\bX|)^n$. In particular, the lower bound (\ref{eqn:lower-bound-on-noncausal-capacity}) provides a single-letter lower bound to the non-causal capacity when $n$ is set to equal one.\\
Inequality (\ref{eqn:lower-bound-on-noncausal-capacity}) together with a converse result implies that
\begin{align}\label{eqn:capacity-formula-non-causal}
C(W_{\bS\times\bX\to\kr},p)=\lim_{n\to\infty}\frac{1}{n}\max_{q_{\bS^n\bU_n\bX^n}\in A_n}\left(\chi(p_{\bU_n},W_{\bU_n\to\kr^{\otimes n}})-I(U_n;S^n)\right).
\end{align}
\end{theorem}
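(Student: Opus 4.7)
The statement contains three claims: a regularization tying the non-causal capacity to a limit of causal ones, the Gel'fand--Pinsker-type lower bound \eqref{eqn:lower-bound-on-noncausal-capacity}, and the regularized identity \eqref{eqn:capacity-formula-non-causal}. For the regularization, I will combine a blocking argument with Theorem \ref{theorem:main-result-II} and the Fano--Holevo inequality. The direction $\frac{1}{n}C_c(W^{\otimes n},p^{\otimes n})\le C(W,p)$ follows because any causal code for $(W^{\otimes n},p^{\otimes n})$ of length $k$ is a non-causal code for $(W,p)$ of length $kn$ with restricted state access, and the non-causal capacity is additive under blocking, $C(W^{\otimes n},p^{\otimes n})=nC(W,p)$. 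For the reverse inequality, fix $R<C(W,p)$ and a sequence of non-causal codes $\mathcal{K}_n$ of length $n$ with $\log M_n\ge nR$ and error $\eps_n\to 0$; with $\bU=[M_n]$, $q$ uniform on $\bU$, and $\tilde v(x^n|s^n,m):=e(x^n|m,s^n)$, the pair $(q,V)$ is admissible in the causal formula of Theorem~\ref{theorem:main-result-II} applied to $(W^{\otimes n},p^{\otimes n})$, and its induced Holevo quantity is at least $(1-\eps_n)\log M_n-h(\eps_n)\ge nR-o(n)$ by combining Holevo's theorem with Fano's inequality. Hence $\liminf_n n^{-1}C_c(W^{\otimes n},p^{\otimes n})\ge R$, giving the regularization.

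\medskip

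For \eqref{eqn:lower-bound-on-noncausal-capacity}, fix $n$, $q_{\bS^n\bU_n\bX^n}\in A_n$ and the derived cq-channel $W_{\bU_n\to\kr^{\otimes n}}$. The plan is a cq-adapted Gel'fand--Pinsker random-binning scheme: for outer length $nk$, each of $M$ messages is assigned a bin of roughly $2^{kI(U_n;S^n)}$ codewords $u^k\in\bU_n^k$ drawn i.i.d.\ from $p_{\bU_n}^{\otimes k}$; given the state string $(s^n)^k$, the encoder locates in its bin a $u^k$ jointly typical with $(s^n)^k$ under $q_{S^nU_n}$ (existence with high probability is the covering lemma) and then transmits $x^{nk}$ drawn from $q_{X^n|S^nU_n}^{\otimes k}$. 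Decoding uses the square-root POVM built from the frequency-typical subspaces \eqref{eqn:def-of-Vf} in the manner of \cite{noetzel-hypothesis}; it succeeds as long as $M\cdot 2^{kI(U_n;S^n)}\le 2^{k\chi(p_{\bU_n},W_{\bU_n})-o(k)}$, i.e.\ $R\le n^{-1}[\chi(p_{\bU_n},W_{\bU_n})-I(U_n;S^n)]$. Maximization over $A_n$ yields \eqref{eqn:lower-bound-on-noncausal-capacity}. The main obstacle here is coupling random binning with quantum decoding: the non-commutative union bound must be controlled uniformly over all bins and all typical state sequences, and the structural properties of the frequency-typical POVMs (notably their tie to the Kostka numbers through \eqref{eqn:def-of-Vf}) are what make this control tractable.

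\medskip

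Identity \eqref{eqn:capacity-formula-non-causal} results from combining the previous two parts with a short observation. Applying Theorem~\ref{theorem:main-result-II} to $(W^{\otimes n},p^{\otimes n})$, every admissible $(q,V)$ with $V\in Ch_{p^{\otimes n}}(\bU,\bS^n\times\bX^n)$ induces a joint distribution $q_{S^nUX^n}(s^n,u,x^n):=q(u)\tilde v(x^n|s^n,u)p^{\otimes n}(s^n)$ belonging to $A_n$; the factorization forced by $Ch_{p^{\otimes n}}$ implies $v(s^n|u)=p^{\otimes n}(s^n)$, so $I(U_n;S^n)=0$ and $\chi(p_{\bU_n},W_{\bU_n})-I(U_n;S^n)=\chi(q,W^{\otimes n}\circ V)$. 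Maximizing gives $\max_{A_n}(\chi-I)\ge C_c(W^{\otimes n},p^{\otimes n})$, and together with the regularization this yields $\lim_n n^{-1}\max_{A_n}(\chi-I)\ge C(W,p)$; the reverse inequality is exactly \eqref{eqn:lower-bound-on-noncausal-capacity}. The cardinality bound $|\bU_n|\le(2|\bS||\bX|)^n$ follows from a Caratheodory-type support-size reduction applied to the concave objective $q_{\bU_n}\mapsto\chi(p_{\bU_n},W_{\bU_n})-I(U_n;S^n)$ with the conditionals $q_{S^nX^n|\bU_n}$ held fixed, in the spirit of the classical Wyner--Ziv argument.
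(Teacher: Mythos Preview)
Your plan is correct and tracks the paper's proof closely. Two small points of divergence are worth flagging. First, for the direct part you invoke the non-commutative union bound, but the paper actually handles Theorem~\ref{theorem:main-result} with the square-root (pretty-good) measurement and the Hayashi--Nagaoka inequality $(S+T)^{-1/2}S(S+T)^{-1/2}\ge 2S-\eins-4T$; the NCU bound is reserved for the causal Theorem~\ref{theorem:main-result-II}. Either tool would work here, but be consistent: if you commit to the square-root POVM, the relevant packing estimate is Hayashi--Nagaoka, not NCU. Second, your converse for \eqref{eqn:capacity-formula-non-causal} routes through the regularization identity (code $\Rightarrow$ $C_c(W^{\otimes n},p^{\otimes n})\ge\log M_n-o(n)$ $\Rightarrow$ $\max_{A_n}(\chi-I)\ge C_c$), whereas the paper goes in one step: with $\mathfrak U:=\mathfrak M_n$ one has $I(\mathfrak U;\mathfrak S^n)=0$ by independence of message and state, so Fano plus Holevo give $\log M_n\le\chi(p_\bU,W_{\bU\to\kr^{\otimes n}})-I(\mathfrak U;\mathfrak S^n)+o(n)$ directly. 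The two arguments are equivalent---yours just unpacks the same Fano/Holevo step through the causal formula---but the paper's version is shorter and makes the role of $I(U;S^n)=0$ explicit. Your cardinality argument via support reduction matches the paper's Ahlswede--K\"orner style computation.
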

\begin{remark}
As in the classical Gel'fand-Pinsker theorem \cite[Theorem 1 and Proposition 1]{gelfand-pinsker} the functions $\Phi_n$ going from $A_n$ to $\mathbb R$ defined by $\Phi_n(q_{\bS^n,\bU,\bX^n}):=\chi(q_{\bU_n},W_{\bU\to\kr^{\otimes n}})-I(S^n;U_n)$ have a convexity property: it is always possible to write a $q_{\bS^n,\bU_n,\bX^n}\in A_n$ as $q_{\bS^n,\bU_n,\bX^n}(s^n,u,x^n)=q_{\bU_n|\bS^n}(u|s^n)p^{\otimes n}(s^n)q_{\bX^n|\bS^n\bU_n}(x^n|s^n,u)$, and from convexity of the Holevo quantity in the channel (or the states of the ensemble, respectively) it then follows that each $\Phi_n$ is convex in $q_{\bX^n|\bS^n\bU_n}$, if the other quantities remain fixed. Since the maximum of a convex function over a closed convex set is always achieved at the boundary it follows that the maximum in equation \eqref{eqn:capacity-formula-non-causal} is always achieved for an extremal map $q_{\bX^n|\bS^n\bU_n}$, which can be written as $q_{\bX^n|\bS^n\bU_n}(x^n|s^n,u)=\delta(\varphi(s^n,u),x^n)$ for some appropriately chosen function $\varphi:\bS^n\times\bU_n\to\bX^n$ - randomization at the encoder is not necessary.\\
We have been unable so far to prove that concavity of $\Phi_n$ in $q_{\bU_n|\bS^n}$ holds. While this seems to be of minor importance it does hold for the original Gel'fand Pinsker problem, and this may be giving us a hint as to why the capacity cannot be easily single-letterized.
\end{remark}
\begin{remark}
One immediate consequence of the capacity formula (\ref{eqn:capacity-formula-non-causal}) is that the non-causal capacity is a continuous function of the system parameters $(W_{\bS\times\bX\to\kr},p)$. The continuity of the quantum capacity of a channel was listed as an open problem on the problem page \cite{problempage} of the ITP Hannover for about six years. The capacity was finally proven to be continuous by Leung and Smith in \cite{leung-smith}. That the question of continuity itself is not a trivial one can be seen by taking a look at other capacities like e.g. the zero-error capacity (see e.g. \cite{duan-severini-winter} for precise definitions in case of quantum channels), which is not continuous (see e.g. \cite{abbn} for that observation).\\
Recently it has been demonstrated \cite{bn-positivity} that the capacity of arbitrarily varying quantum channels can be discontinuous as well. Especially this latter model is very close to the one treated here, with the only exception being that the sender has no information regarding the channel state sequence and, in addition, the choice of channel state cannot be assumed to follow a probabilistic law. In that model, it is usually assumed that the choice of channel state sequence is made by a third party that tries to prevent communication and is therefore called a jammer.
\end{remark}
\end{section}
\begin{section}{\label{sec:Proofs}Proofs}
\begin{proof}[Direct part of Theorem \ref{theorem:main-result}]
Let $p_{\bS\bU}\in\mathfrak P(\bS\times\bU)$ be any probability distribution such that its one marginal satisfies $p_{\bS}=p$. Without loss of generality, $p(s)>0$ for every $s\in\bS$. Since $\bU$ is free to choose we may as well assume that the other marginal $p_{\bU}$ satisfies $p_{\bU}(u)>0$ for every $u\in\bU$. We may then define $p_{\bS|\bU}(\cdot|u)\in\mathfrak P(\bS)$ by $p_{\bS|\bU}(s|u):=p_{\bS\bU}(s,u)/p_{\bU}(u)$ for all $u,s\in\bU,\bS$. While the original channel is $W_{\bS\times\bX\to\kr}$ with output states $\tilde \rho_{s,x}$, the use of $q\in A_1$ with respective marginal distribution $q_{\bS\bU}=p_{\bS\bU}$ and conditional distribution $q_{\bX|\bS\bU}$ defined by $q_{\bX|\bS\bU}(x|s,u):=q_{\bS\bU\bX}(s,u,x)/q_{\bS\bU}(s,u)$ for every choice of $s\in\bS$, $u\in\bU$ and $x\in\bX$ defines a channel $W_{\bS\times\bU\to\kr}$ via the output states $\rho_{s,u}:=\sum_{x\in\bX}q_{\bX|\bS\bU}(x|s,u)\tilde\rho_{s,x}$ and another channel $W_{\bU\to\kr}$ defined by $W_{\bU\to\kr}(u):=\sum_{s\in\bS}p(s)\rho_{s,u}$. We now come to our choice of code.\\
Consider the probability of successful transmission of $K$ messages over a random choice of $K\cdot M$ code words, each chosen independently and according to $p_{\bU}^n(\cdot):=\eins_{T_U}(\cdot)|T_U|^{-1}$, where $T_U:=\{u^n:N(u|u^n)=t(u)\}$ for some type $t$ such that $\bar t\in\mathfrak P_0^n(\bU)$. To any given $\epsilon>0$ we can choose $n$ large enough such that $\|\bar t-p_{\bU}\|_1<\epsilon$ is assured if necessary. Choosing $2\cdot\epsilon<\beta(p_\bU):=\min_{u\in\bU}p_{\bU}(u)$ we additionally get $\bar t(u)>\beta(p_\bU)/2$ for all large enough $n\in\nn$ and $u\in\bU$.\\
More precisely, a code $\mathcal C$ is a set $\{u^n_{km}\}_{k,m=1}^{K,M}\subset\bU^n$ of code words, to which we associate a POVM $\{\Lambda_{km}(\mathcal C)\}_{k,m=1}^{K,M}\in\mathcal M_{K\cdot M}(\kr^{\otimes n})$ for the decoder. The exact choice of POVM will be explained later.\\
The code is chosen at random, with the underlying distribution given by $\mathbb P(\mathcal C)=\prod_{i,j=1}^{K,M}p_{\bU}^{n}(u_{ij}^n)$. An additional feature then is that the encoder only uses those words which are jointly typical with the channel state $s^n$. Of course, in order to specify ``joint typicality'' we need to introduce the parameter $\delta>0$ which will remain fixed for the remainder of the discussion, so that we can spare one index. To any given choice $u^n\in T_U$ we set
\begin{align}\label{eqn:def-of-M(u^n)}
M(u^n):=\{s^n:\max_{u\in\bU}\bar t(u)\cdot D(t(u)^{-1}N(\cdot,u|s^n,u^n)\|p_S(\cdot|u))\leq\delta/2\}.
\end{align}
It may in principle be possible that this set is empty. A code word $u^n$ is only used at the encoder if the state sequence chosen by the Jammer satisfies $s^n\in M(u^n)$. For a given collection $\{u_{km}\}_{k,m=1}^{K,M}$ let us set $K(m,s^n):=\{k:s^n\in M(u_{km})\}$. Roughly speaking, this ensures that code words always have a certain structure relative to the Jammer's choice. The expected average success probability of a random code then is
\begin{align}\label{eqn:av-error}
\mathbb E\mathrm{psu}:=\sum_{\mathcal C}\mathbb P(\mathcal C)\frac{1}{M}\sum_{m=1}^M\sum_{s^n\in\bS^n}p_S^{\otimes n}(s^n)\frac{1}{|K(m,s^n)|}\sum_{k\in K(m,s^n)}\tr\{\rho_{s^n,u_{km}}D_m(\mathcal C)\}.
\end{align}
Here, given that the state sequence $s^n$ and the message $m$, the encoder encodes $m$ into any of the code words $u_{km}\in K(m,s^n)$ with equal probability. The index $k$ of the code word is not decoded by the receiver.\\
The use of such code words generates sequences at the output of the channel which look (up to small deviations) as if they were randomly drawn according to $p_{\bS\bU}^{\otimes n}$. Thus on average, the decoder gets the state $\left(\sum_{u,s}p_{\bS\bU}(s,u)\rho_{u,s}\right)^{\otimes n}$.\\
It remains to define the POVM $\mathbf D(\mathcal C)$.\\
We let $\{e_i\}_{i=1}^d$ be an orthonormal basis in which $\bar\rho:=\sum_{u,s}p_{\bS\bU}(s,u)\rho_{s,u}$ is diagonal. From now on, it is understood that the vector spaces $V_f$ defined in \eqref{eqn:def-of-Vf} are defined using that basis. Let $P_f$ denote the projections onto these vector spaces. Clearly, it holds that $P_f=\sum_{f,\lambda}P_{f,\lambda}$, where $P_{f,\lambda}$ are the projections onto $V_{f,\lambda}$ and $\tr\{P_{f,\lambda}P_{f',\lambda'}\}=0$ whenever $(f,\lambda)\neq(f',\lambda')$. For every $m\in\nn$, $u\in\bU$ and $\delta'>0$ we now set
\begin{align}
A_{u,\delta'}^m&:=\left\{(f,\lambda):D(\bar f\|\tilde r_{\rho_u})\leq\delta',\ D(\bar\lambda\|r_{\rho_u})\leq\delta',\ \lambda\in\mathbb Y_{d,m},\ \bar f\in\mathfrak P_0^m([d])\right\}.
\end{align}
The quantities $r_{\rho_u}$ and $\tilde r_{\rho_u}$ got introduced at the end of the notations section. Take any ordering of $\bU$, such that we can without loss of generality write $\bU=\{1,\ldots,|\bU|\}$. For each $u^n\in\bU^n$, let $\tau\in S_n$ be a permutation which achieves $\tau(u^n)=(1,1,\ldots,1,2,2,\ldots,2,\ldots)$ e.g. $\tau$ orders the symbols in $u^n$ in increasing order. We then write $\kr_u:=\kr^{\otimes t(u)}$ and define $P(u|u^n)\in\mathcal B(\kr_u)$ and $P(u^n)\in\mathcal B(\kr^{\otimes n})$ via
\begin{align}\label{eqn:def-of-decoding-projection}
P(u|u^n)&:=\sum_{(f,\lambda)\in A_{u,n\cdot\delta/t(u)}^{t(u)}}P_{f,\lambda}\\
P(u^n)&:=\tau^{-1}\left(\bigotimes_{u\in\bU}P(u|u^n)\right)\tau,
\end{align}
where the action of $\tau$ on $\kr^{\otimes n}$ is the standard action of the symmetric group. These projections satisfy, for every $u\in\mathbf U$:
\begin{align}
\tr\{P(u|u^n)\rho^{\otimes t(u)}_u\}&=1-\sum_{(f,\lambda)\notin A_{u,n\cdot\delta/t(u)}^{t(u)}}\tr\{P_{f,\lambda}\rho_u^{\otimes t(u)}\}\\
&\geq1-\sum_{(f,\lambda)\notin A_{u,n\cdot\delta/t(u)}^{t(u)}}\max\left\{\tr\{P_{f}\rho_u^{\otimes t(u)}\},\tr\{P_\lambda\rho_u^{\otimes t(u)}\}\right\}\\
&\geq1-\sum_{(f,\lambda)\notin A_{u,n\cdot\delta/t(u)}^{t(u)}}(2d)^{d^2}\max\left\{2^{t(u)\cdot D(\bar f\|\tilde r_u)},2^{t(u)\cdot D(\bar\lambda\|r_u)}\right\}\\
&\geq1-(2\cdot t(u))^{d^4}2^{-t(u)\cdot\delta\cdot\frac{n}{t(u)}}\\
&\geq1-(2\cdot n)^{d^4}2^{-n\cdot\delta}.\label{eqn:estimate-2}
\end{align}
Here, the first inequality follows from two observations: first, with $P_\lambda:=\sum_fP_{f,\lambda}$ we have $P_{f,\lambda}=P_f\cdot P_\lambda=P_\lambda\cdot P_f$ (for all $f$ such that $\bar f\in\mathfrak P^n_0([d])$ and $\lambda\in\mathbb Y_{d,n}$). This can for example be seen from the construction of Young symmetrizers in \cite[Chapter 5.5]{sternberg}. Second, if two projections $Q$ and $Q'$ satisfy $Q\cdot Q'=Q'\cdot Q$, then for every $X\geq0$ we have
\begin{align}
\tr\{XQQ'\}&=\tr\{QXQQ'\}\leq\tr\{QXQ\}=\tr\{XQ\}
\end{align}
and, at the same time,
\begin{align}
\tr\{XQQ'\}&=\tr\{Q'XQ'Q\}\leq\tr\{Q'XQ'\}=\tr\{XQ'\}.
\end{align}
The second inequality arises as follows: First observe that $\tr\{P_f\rho_u^{\otimes t(u)}\}=\tr\{P_f\}\prod_{i=1}^d\langle e_i,\rho_ue_i\rangle^{f(i)}$, then combine this observation with the upper bound in Lemma 2.3 of \cite{csiszar-koerner} (see equation \eqref{eqn2}) and the definition of the relative entropy. Second, observe that for an arbitrary $\lambda\in\mathbb Y_{d,t(u)}$ and $\sigma\in\mathcal S(\mathbb C^d)$ it holds $\tr\{P_\lambda\sigma^{\otimes t(u)}\}\leq(2\cdot t(u))^{d^2}\cdot2^{-n\cdot D(\bar\lambda\|\spec(\sigma))}$ (see for example \cite[Theorem 4]{christandl-horn_and_LW_coefficients} and references therein).\\
The third inequality is simple type counting and the fourth uses the definition of $A_{u,\delta'}^n$.

It follows that
\begin{align}
\tr\{P(u^n)\rho_{u^n}\}&\geq1-|\bU|\cdot\max_{u\in\bU}(2\cdot n)^{d^4}\cdot2^{-n\cdot\delta}\\
&=1-2^{-n\cdot(\delta-\frac{d^4}{n}\log(|\bU|^2\cdot n))}\\
&\geq1-2^{-n\cdot\delta/2},\label{eqn:estimate-1}
\end{align}
if only $n$ is large enough. For message transmission over a known memoryless channel, this estimate would already be completely sufficient. However in our case the receiver is kept ignorant about the choice $s^n$ of the Jammer, which is only revealed to the encoder. Thus the encoder will try to ensure that $N(\cdot|s^n,u^n)\approx p_{\bU\bS}$. Let us now for the moment consider an arbitrary pair $(s^n,u^n)$. We investigate the stability of the estimate (\ref{eqn:estimate-1}) under small variations in the following sense: For every $u\in\bU$, define $\tilde p(\cdot|u)\in\mathfrak P_0^{t(u)}(\bS)$ by fixing for each $s\in\bS$ and $u\in\bU$ the numbers $\tilde p(s|u)$ via $\tilde p(s|u):=N(s,u|s^n,u^n)/t(u)$. Then
\begin{align}
\tr\{\rho_{s^n,u^n}P(u^n)\}&=\prod_{u\in\bU}\tr\left\{\sum_{s^{t(u)}\in T_{\tilde p(\cdot|u)}}\frac{1}{|T_{\tilde p(\cdot|u)}|}\rho_{s^{t(u)},u}P(u|u^n)\right\},
\end{align}
and equality holds since all our POVMs are permutation-invariant on those blocks $\kr_u$ where $u^n$ is constant. But whenever an operator $P\in\mathcal B(\kr_u)$ is invariant on a block where $u^n$ is constant we get
\begin{align}
\tr\{\rho_{s^{t(u)},u}P\}&=\tr\left\{\sum_{\tilde s^{t(u)}\in T_{\tilde p(\cdot|u)}}\frac{1}{|T_{\tilde p(\cdot|u)}|}\rho_{\tilde s^{t(u)},u}P\right\}\\
&=\frac{1}{|T_{\tilde p(\cdot|u)}|\cdot p^{\otimes t(u)}(s^{t(u)}|u)}\tr\left\{\sum_{\tilde s^{t(u)}\in T_{\tilde p(\cdot|u)}}p_\bS^{\otimes t(u)}(\tilde s^{t(u)})\rho_{\tilde s^{t(u)},u}P\right\}\\
&\leq\frac{1}{|T_{\tilde p(\cdot|u)}|\cdot p^{\otimes t(u)}(s^{t(u)}|u)}\tr\{\rho_u^{\otimes n}P\}\\
&\leq(2n)^{|\bS|}2^{t(u)\cdot D(\tilde p(\cdot|u)\|p_\bS(\cdot|u))}\cdot\tr\{\rho_u^{\otimes n}P\},
\end{align}
where the last inequality follows from the upper bound on $T_{\tilde p(\cdot|u)}$ in Lemma 2.3 of \cite{csiszar-koerner} (see equation \eqref{eqn2}). Comparing with our previous estimate (\ref{eqn:estimate-2}) and using, for the moment, the notation $\bu:=(u,\ldots,u)\in\bU^{t(u)}$ this allows us to deduce that
\begin{align}
\tr\{\rho_{s^{t(u)},\bu}P_u\}&=1-\sum_{(f,\lambda)\notin A_{u,n\cdot\delta/t(u)}^{t(u)}}\tr\{\rho_{s^{t(u)},u}P(u|u^n)\}\\
&\geq1-(2n)^{|\bS|}2^{t(u)\cdot D(\tilde p(\cdot|u)\|p_\bS(\cdot|u))}\sum_{(f,\lambda)\notin A_{u,n\cdot\delta/t(u)}^{t(u)}}\tr\{\rho_u^{\otimes n}P_{f,\lambda}\}\\
&\geq1-(2n)^{|\bS|}(2n)^{d^2}2^{t(u)\cdot D(\tilde p(\cdot|u)\|p_\bS(\cdot|u))}2^{-n\cdot\delta}\\
&=1-(2n)^{|\bS|+d^2}2^{n(\bar t(u)D(\tilde p(\cdot|u)\|p_\bS(\cdot|u))-\delta)},
\end{align}
and ultimately leads, for all $s^n\in M(u^n)$ (which then satisfy $\max_{u\in\bU}\bar t(u)D(\bar N(\cdot|s^{t(u)}\|p_S(\cdot|u))\leq\delta/2$) to
\begin{align}\label{eqn:perviously-obtained-estimate}
\tr\{\rho_{s^n,u^n}P(u^n)\}&\geq1-|\bU|(2n)^{|\bS|+d^2}2^{n(\delta/2-\delta)}\\
&\geq1-2^{-n\delta/4},
\end{align}
for all large enough $n\in\mathbb N$.\\
We now continue with the definition of our POVM: we identify any given collection $(u_{km})_{k,m=1}^{K,M}$ of code words with the code $\mathcal C$ (e.g. we use $\mathcal C$ as a shorthand for $(u_{km})_{k,m=1}^{K,M}$) which arises from using the following POVM: For any $k,m$ use the abbreviation $P_{km}:=P(u^n_{km})$. We define
\begin{align}
P_m:=\sum_{k=1}^KP_{km}
\end{align}
and
\begin{align}
D_m:=\left(\sum_{m'=1}^{M}P_{m'}\right)^{-1/2}P_{m}\left(\sum_{m'=1}^{M}P_{m'}\right)^{-1/2}.
\end{align}
Through application of the Hayashi-Nagaoka bound $(S+T)^{-1/2}S(S+T)^{-1/2}\geq2S-\eins-4T$ to the $D_m$ we get
\begin{align}
D_m\geq2\sum_{k=1}^KP_{km}-\eins-4\sum_{k=1}^K\sum_{m'\neq m}P_{km'}.
\end{align}
Recall from equations \eqref{eqn:av-error} and \eqref{eqn:def-of-M(u^n)} that
\begin{align}
M(u^n)&:=\{s^n:\max_{u\in\bU}\bar t(u)\cdot D(t(u)^{-1}N(\cdot,u|s^n,u^n)\|p_S(\cdot|u))\leq\delta/2\},\\
K(m,s^n)&:=\{k:s^n\in M(u_{km})\}.
\end{align}
For a given random choice $(u^n_{km})_{k,m=1}^{K,M}$ of codewords, it will be necessary to see whether for a given $m$ the set $K(m,s^n)$ is empty or not. It will also turn out that the specific choice of $m$ is only of minor importance. We therefore define
\begin{align}
T(s^n):=\{u^n_1,\ldots,u^n_K:s^n\in M(u^n_k)\ \mathrm{for\ at\ least\ one}\ k\in[K]\}.
\end{align}
With this and the previously obtained estimates such as \eqref{eqn:perviously-obtained-estimate} we can then lower bound the expected average success probability:
\begin{align}
\mathbb E\mathrm{psu}&=\sum_{\mathcal C}\mathbb P(\mathcal C)\frac{1}{M}\sum_{m=1}^M\sum_{s^n\in\bS^n}\sum_{k\in K(m,s^n)}\frac{p^{\otimes n}(s^n)}{|K(m,s^n)|}\tr\{\rho_{s^n,u_{km}}D_m(\mathcal C)\}\\
&\geq\sum_{\mathcal C}\mathbb P(\mathcal C)\frac{1}{M}\sum_{m=1}^M\sum_{s^n\in\bS^n}\sum_{k\in K(m,s^n)}\frac{p^{\otimes n}(s^n)}{|K(m,s^n)|}\tr\{\rho_{s^n,u_{km}}(2\cdot P_{km}-\eins-4\sum_{k'=1}^K\sum_{m'\neq m}P_{k'm'})\}\\
&\geq\sum_{u^n_{k1},\ldots,u^n_{Km}}\prod_{k=1}^KP_\bU(u^n_{k1})\sum_{s^n\in\bS^n}\sum_{k\in K(1,s^n)}\frac{p^{\otimes n}(s^n)}{|K(1,s^n)|}\cdot(1-2^{-n\cdot\delta/4})\\
&\qquad-4\sum_{\mathcal C}\mathbb P(\mathcal C)\frac{1}{M}\sum_{m=1}^M\sum_{s^n\in\bS^n}p^{\otimes n}(s^n)\sum_{k\in K(m,s^n)}\frac{1}{|K(m,s^n)|}\tr\{\rho_{s^n,u_{km}}\sum_{k'=1}^K\sum_{m'\neq m}P_{k'm'}\}\\
&\geq p^{\otimes n}(T_{p,\delta}^n)\min_{s^n\in T_{p,\delta}^n}\sum_{u^n_{k1},\ldots,u^n_{Km}}\prod_{k=1}^KP_\bU(u^n_{k1})\sum_{k\in K(1,s^n)}\frac{1}{|K(1,s^n)|}\cdot(1-2^{-n\cdot\delta/4})\\
&\qquad-4\sum_{\mathcal C}\mathbb P(\mathcal C)\frac{1}{M}\sum_{m=1}^M\sum_{s^n\in\bS^n}p^{\otimes n}(s^n)\sum_{k\in K(m,s^n)}\frac{1}{|K(m,s^n)|}\tr\{\rho_{s^n,u_{km}}\sum_{k'=1}^K\sum_{m'\neq m}P_{k'm'}\}\\
&\geq p^{\otimes n}(T_{p,\delta}^n)\min_{s^n\in T_{p,\delta}^n}\sum_{u^n_{k1},\ldots,u^n_{Km}}\prod_{k=1}^KP_\bU(u^n_{k1})\eins_{T(s^n)}(u^n_{11},\ldots,u^n_{K1})\cdot(1-2^{-n\cdot\delta/4})\\
&\qquad-4\sum_{\mathcal C}\mathbb P(\mathcal C)\frac{1}{M}\sum_{m=1}^M\sum_{s^n\in\bS^n}p^{\otimes n}(s^n)\sum_{k\in K(m,s^n)}\frac{1}{|K(m,s^n)|}\tr\{\rho_{s^n,u_{km}}\sum_{k'=1}^K\sum_{m'\neq m}P_{k'm'}\}\\
&=p^{\otimes n}(T_{p,\delta})\min_{s^n\in T_{p,\delta}^n}\mathbb P(T(s^n))\cdot(1-2^{-n\cdot\delta/4})\\
&\qquad-4\sum_{\mathcal C}\mathbb P(\mathcal C)\frac{1}{M}\sum_{m=1}^M\sum_{s^n\in\bS^n}p^{\otimes n}(s^n)\sum_{k\in K(m,s^n)}\frac{1}{|K(m,s^n)|}\tr\{\rho_{s^n,u_{km}}\sum_{k'=1}^K\sum_{m'\neq m}P_{k'm'}\}.
\end{align}
for all large enough $n$. Here the first inequality is a consequence of the Hayashi-Nagaoka bound. The second follows from the definition of $K(m,s^n)$ together with estimate \eqref{eqn:perviously-obtained-estimate}. From there until the last inequality we just keep rewriting the first term of the sum until it fits Lemma \ref{lemma:probability-estimate} in the appendix, which we will apply later in equation \eqref{eqn:application-of-Lemma-7}.\\
We now start investigating the second term in the sum that lower bounds $\mathbb E p_{\mathrm{psu}}$. Since all code words are drawn independently we only need to consider the term with $m=1$ in the following. Consequently, our next goal is to give an upper bound on
\begin{align}
\sum_{(u_{km})_{k=1,m=2}^{K,M}}\mathbb P((u_{km})_{k=1,m=2}^{K,M})\sum_{s^n\in\bS^n}p^{\otimes n}(s^n)\sum_{k\in K(1,s^n)}\frac{1}{|K(1,s^n)|}\tr\{\rho_{s^n,u_{k1}}\sum_{k'=1}^K\sum_{m=2}^MP_{k'm}\}.
\end{align}
Due to the i.i.d. choice of codewords, the above quantity can be written as
\begin{align}
\sum_{(u_{k1})_{k=1}^{K}}\mathbb P((u_{k1})_{k=1}^{K})\sum_{s^n\in\bS^n}p^{\otimes n}(s^n)\sum_{k\in K(1,s^n)}\frac{1}{|K(1,s^n)|}\tr\{\rho_{s^n,u_{k1}}\bar A\}.
\end{align}
with the average operator
\begin{align}
\bar A&:=\sum_{(u^n_{km})_{k=1,m=2}^{K,M}}\prod_{k=1}^K\prod_{m=2}^{M}p^n_U(u_{km})\sum_{k=1}^K\sum_{m=2}^{M}P_{k,m}\\
&=K\cdot(M-1)\cdot \sum_{u^n\in T_U}\frac{1}{|T_U|}P(u^n)\\
&\leq K\cdot M\cdot\sum_{u^n\in T_U}\frac{1}{|T_U|}P(u^n).
\end{align}
It is readily seen from this formulae that the only important calculation to be done is the following. For a $u^n\in T_U$ and $s^n\in M(u^n)$, calculate $\tr\{\rho_{s^n,u^n}\bar A\}$. A very nice property of the POVM we utilize here is that $\bar A$ is permutation-invariant. A very delicate property of our POVM is its instability with respect to the output states. We have to make sure that every of our $\rho_{s^n,u^n}$ looks, on average over $S_n$, like $\bar\rho^{\otimes n}$ - otherwise we stand no chance of getting the quantum relative entropy into the game. We achieve our goal as follows: Set $N(\cdot):=N(\cdot|s^n,u^n)$, and define $T_N:=\{(v^n,t^n):N(\cdot|v^n,t^n)=N(\cdot)\}$. Then
\begin{align}
\tr\{\rho_{s^n,u^n}\bar A\}&=\frac{1}{|T_{N}|}\tr\left\{\sum_{(v^n,t^n)\in T_{N}}\rho_{v^n,t^n}\bar A\right\}\\
&=\frac{1}{|T_{N}|\cdot p_{\bS\bU}^{\otimes n}(s^n,u^n)}\tr\left\{\sum_{(v^n,t^n)\in T_{N}}p_{\bS\bU}^{\otimes n}(s^n,u^n)\rho_{v^n,t^n}\bar A\right\}\\
&=\frac{1}{p_{\bS\bU}^{\otimes n}(T_{N})}\tr\left\{\sum_{(v^n,t^n)\in T_{N}}p_{\bS\bU}^{\otimes n}(v^n,t^n)\rho_{v^n,t^n}\bar A\right\}\\
&\leq\frac{1}{p_{\bS\bU}^{\otimes n}(T_{N})}\tr\left\{\sum_{(v^n,t^n)}p_{\bS\bU}^{\otimes n}(v^n,t^n)\rho_{v^n,t^n}\bar A\right\}\\
&=\frac{1}{p_{\bS\bU}^{\otimes n}(T_{N})}\tr\{\left(\sum_{s,u}p_{\bS\bU}(s,u)\rho_{s,u}\right)^{\otimes n}\bar A\}\\
&=\frac{1}{p_{\bS\bU}^{\otimes n}(T_{N})}\tr\left\{\bar\rho^{\otimes n}\bar A\right\},
\end{align}
which is already very nice. We can now estimate that, for large enough $n\in\nn$,
\begin{align}
\mathbb E\mathrm{psu}&\geq\mathbb P(\forall s^n\in T_{p,\delta}^n\exists k\in[K]:K(k,s^n)\neq\emptyset)\cdot p^{\otimes n}(T_{p,\delta})\cdot(1-2^{-n\cdot\delta/4})\\
&\qquad-4\sum_{(u_{k1})_{k=1}^{K}}\mathbb P((u_{k1})_{k=1}^{K})\sum_{s^n\in\bS^n}p^{\otimes n}(s^n)\sum_{k\in K(1,s^n)}\frac{1}{|K(1,s^n)|}\frac{1}{p_{\bS\bU}^{\otimes n}(T_{(u_{k1},s^n)})}\tr\{\bar\rho^{\otimes n}\bar A\}\label{eqn:application-of-Lemma-7}\\
&\geq1-2^{-n\cdot\delta/5}-4\cdot2^{-n\cdot\delta/4}\cdot\tr\{\bar\rho^{\otimes n}\bar A\}.
\end{align}
Here, the second inequality follows from Lemma \ref{lemma:probability-estimate}, from fact (\ref{eqn:fact-1}) and from the fact that $u^n\in K(1,s^n)$ implies (with the help of \cite[Lemma 2.3]{csiszar-koerner} (see the inequalities in \eqref{eqn2})
\begin{align}
p_{\bS\bU}^{\otimes n}(T_{N(\cdot|s^n,u^n)})&\geq(2n)^{|\bS\times\bU|}2^{n\cdot H(\bar N(\cdot|s^n,u^n))}2^{n\sum_{s,u}\bar N(s,u|s^n,u^n)\log p_{\bS\bU}(s,u))}\\
&=(2n)^{|\bS\times\bU|}2^{-n\cdot D(\bar N(\cdot|s^n,u^n)\|p_{\bS\bU})}\\
&=(2n)^{|\bS\times\bU|}2^{-n\cdot\sum_{u\in\bU}p_U(u)D(\tfrac{1}{t(u)}N(\cdot,u|s^n,u^n)\|p_{\bS|\bU}(\cdot|u))}\\
&\geq(2n)^{|\bS\times\bU|}2^{-n\cdot\delta/2}\\
&\geq2^{-n\cdot\delta/4},
\end{align}
if only $n$ is large enough. The use of Lemma \ref{lemma:probability-estimate} does of course necessitate that $\tfrac{1}{n}\log(K)\geq I(U;S)+3\nu(\delta)$. It remains to calculate $\bar A$, a calculation that will make us employ some results from representation theory that were developed in \cite{noetzel-hypothesis}. The goal will be to show that, within small deviations, we have
\begin{align}
\tr\{\bar\rho^{\otimes n}\bar A\}\approx2^{-n\chi(p_\bU,W_{\bU\to\kr})}.
\end{align}
Together with the preceding calculations, this will prove our capacity result.
\\\\
The different code words used by the encoder are taken out of $\bU^n$ according to $p_U^{n}$, and chosen with equal probability on each of the sets $T_N$. We now want to estimate the symmetrized version of $P(u^n)$, more specifically the quantity $\tr\{\bar\rho^{\otimes n}\frac{1}{n!}\sum_{\tau\in S_n}\tau P_{u^n}\tau^{-1}\}$. This is rather easy - since $\bar\rho^{\otimes n}$ is already invariant under permutations we get
\begin{align}
\tr\{\bar\rho^{\otimes n}\sum_{\tau\in S_n}\frac{1}{n!}\tau P(u^n)\tau^{-1}\}&=\tr\{\bar\rho^{\otimes n} P(u^n)\}\\
&=\prod_{u\in\bU}\sum_{(f,\lambda)\in A_{u,n\cdot\delta/t(u)}^{t(u)}}\tr\{\bar\rho^{\otimes t(u)}P_{f,\lambda}\}\\
&\leq\prod_{u\in\bU}(2t(u))^{d^2}\max_{(f_u,\lambda_u)\in A_{u,n\cdot\delta/t(u)}^{t(u)}}\tr\{\bar\rho^{\otimes t(u)}P_{f,\lambda}\}\\
&\leq\prod_{u\in\bU}(2t(u))^{d^2}\max_{(f_u,\lambda_u)\in A_{u,n\cdot\delta/t(u)}^{t(u)}}m_{\lambda_u,t(u)}\dim(V_{\lambda_u})2^{t(u)\sum_{i=1}^d\bar f_u(i)\log \tilde r_{\bar\rho}(i)}\\
&\leq(2n)^{d^4}\prod_{u\in\bU}\max_{(f_u,\lambda_u)\in A_u}2^{t(u)\cdot H(\bar\lambda_u)}2^{t(u)\sum_{i=1}^d\bar f_u(i)\log \tilde r_{\bar\rho}(i)}\\
&\leq(2n)^{d^4}\prod_{u\in\bU}2^{t(u)\cdot(S(\rho_u)-\tfrac{n\cdot\delta}{t(u)}\log(\tfrac{n\cdot\delta}{t(u)\cdot d})}2^{t(u)\sum_{i=1}^d \tilde r_{\rho_u}(i)\log \tilde r_{\bar\rho}(i)+\tfrac{n\cdot\delta}{t(u)}\gamma}\\
&=(2n)^{d^4}\prod_{u\in\bU}2^{t(u)\cdot(D(\rho_u\|\bar\rho)-\tfrac{n\cdot\delta}{t(u)}\log(\tfrac{n\cdot\delta}{t(u)\cdot d})+\tfrac{n\cdot\delta}{t(u)}\gamma)},
\end{align}
where $\gamma:=\max_{i\in[d]}|\log(r_{\bar\rho}(i))|$. We additionally set $\omega:=\max_{u\in\bU}D(\rho_u\|\bar\rho)$, in order to get the estimate
\begin{align}\label{eqn:upper-bound-on-POVM-versus-average-state}
\tr\{\bar\rho^{\otimes n}\sum_{\tau\in S_n}\frac{1}{n!}\tau P(u^n)\tau^{-1}\}&\leq(2n)^{d^4}2^{n\cdot(\chi(p_\bU,W_{\bU\to\kr})+\sum_{u\in\bU}(\tfrac{\delta}{t(u)}(\gamma+\omega)-\tfrac{\delta}{t(u)}\log(\tfrac{\delta}{t(u)\cdot d}))}\\
&\leq(2n)^{d^4}2^{n\cdot(\chi(p_\bU,W_{\bU\to\kr})+|\bU|(\tfrac{\delta}{\beta(p_\bU)}(\gamma+\omega)-\tfrac{\delta}{\beta(p_\bU)}\log(\tfrac{\delta}{\beta(p_\bU)\cdot d}))}\\
&=2^{n\cdot(\chi(p_\bU,W_{\bU\to\kr})+\tfrac{d^4}{n}\log(n)+|\bU|(\tfrac{\delta}{\beta(p_\bU)}(\gamma+\omega)-\tfrac{\delta}{\beta(p_\bU)}\log(\tfrac{\delta}{\beta(p_\bU)\cdot d}))}\\
&=2^{n\cdot(\chi(p_\bU,W_{\bU\to\kr})+\kappa(\delta))},
\end{align}
which holds for all large enough $n$ and with the obvious but not unambiguous definition of $\kappa$ which ensures that $\lim_{\delta\to0}\kappa(\delta)=0$ (note that $\beta(p_\bU):=\min\{p_\bU(u):p_\bU(u)>0\}$). Overall, this leads to
\begin{align}
\mathbb E\mathrm{psu}\geq1-2^{-n\cdot\delta/5}-4\cdot2^{-n\delta/4}\cdot K\cdot M\cdot2^{-n\cdot(\chi(p_\bU,W_{\bU\to\kr})+\kappa(\delta))},
\end{align}
and therefore asymptotically reliable communication is possible (on average over all codebooks) whenever
\begin{align}
\frac{1}{n}\log(K\cdot M)\leq \chi(p_\bU,W_{\bU\to\kr})-\kappa(\delta)-\delta/4\\
\frac{1}{n}\log(K)\geq I(U;S)+3\nu(\delta)
\end{align}
and $\delta$ is so small that $I(U;S)<\nu(\delta)$. Thus under above preliminaries we know that for every $\eps>0$ there has to exist at least one sequence of code words $((u^n_{ij})_{i,j=1}^{M_n,K_n})_{n\in\nn}$ such that the corresponding code has asymptotically vanishing error and rate bounded by $\liminf_{n\to\infty}\frac{1}{n}\log(M_n)\geq\chi(p_\bU,W_{\bU\to\kr})-I(U;S)-\eps$.\\
One may remove the randomness in the encoder if necessary. The proof now only works for distributions $p_{\bS\bU}$ for which $p_{\bU}$ is an empirical distribution. Thus, an additional step is to approximate an arbitrary $p_{\bS\bU}$ by one for which $p_\bU\in\mathfrak P^n_0(\bU)$. Continuity of the Holevo-information then yields the desired result.
\end{proof}
\begin{proof}[Direct part of Theorem \ref{theorem:main-result-II}]
This result follows trivially from the channel coding results for the discrete memoryless case without any additional state knowledge. For reader's convenience, we demonstrate here how the use of our POVMs delivers an elegant and streamlined proof of the direct part of the channel coding theorem with successive decoders. To this end we employ the beautiful non-commutative union bound:
\begin{theorem}[Noncommutative union bound \cite{wilde}] For a sub normalized state $\sigma$ such that $\sigma\geq0$ and $\tr\{\sigma\}\leq1$, and orthogonal projections $P_1,\ldots,P_M$, the following estimate holds true:
\begin{align}
\tr\{\sigma\}-\tr\{P_1\cdot\ldots\cdot P_M\sigma P_M\cdot\ldots\cdot P_1\}\leq2\sqrt{\sum_{m=1}^M\tr\{(\eins-P_m)\sigma\}}.
\end{align}
\end{theorem}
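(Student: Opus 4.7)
The plan is to reduce the claim to a Hilbert--Schmidt estimate amenable to a clean induction on the number of projections. First, I rewrite both trace expressions via $\sqrt\sigma$: $\tr\{\sigma\} = \|\sqrt\sigma\|_2^2$, and setting $Q := P_1 P_2 \cdots P_M$, cyclicity of the trace gives $\tr\{P_1 \cdots P_M \sigma P_M \cdots P_1\} = \tr\{Q\sigma Q^\ast\} = \|Q\sqrt\sigma\|_2^2$. Factoring the difference of squares and using $\|Q\|_\infty \leq 1$ (product of projections) together with the subnormalization $\|\sqrt\sigma\|_2 \leq 1$ and the reverse triangle inequality, I obtain
\[
\tr\{\sigma\} - \tr\{P_1 \cdots P_M \sigma P_M \cdots P_1\} \leq 2\|(\eins - Q)\sqrt\sigma\|_2.
\]

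The heart of the argument is the following operator-valued lemma: for every operator $X$,
\[
\|(\eins - P_1 P_2 \cdots P_M)X\|_2^2 \leq \sum_{m=1}^M \|(\eins - P_m)X\|_2^2.
\]
I prove this by induction on $M$, using the decomposition $\eins - P_1 \cdots P_M = (\eins - P_1) + P_1(\eins - P_2 \cdots P_M)$ applied to $X$. The crucial point is that the two resulting summands are \emph{orthogonal in the Hilbert--Schmidt inner product}: since $(\eins - P_1)P_1 = 0$, we have $\langle(\eins - P_1)X,\, P_1 Y\rangle_{\mathrm{HS}} = \tr\{X^\ast(\eins - P_1)P_1 Y\} = 0$ for any $Y$. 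The Pythagorean theorem therefore yields
\[
\|(\eins - Q)X\|_2^2 = \|(\eins - P_1)X\|_2^2 + \|P_1(\eins - P_2 \cdots P_M)X\|_2^2 \leq \|(\eins - P_1)X\|_2^2 + \|(\eins - P_2 \cdots P_M)X\|_2^2,
\]
where the inequality uses $\|P_1\|_\infty \leq 1$. The induction hypothesis closes the step.

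Applying the lemma with $X = \sqrt\sigma$ gives $\|(\eins - Q)\sqrt\sigma\|_2^2 \leq \sum_m \|(\eins - P_m)\sqrt\sigma\|_2^2 = \sum_m \tr\{(\eins - P_m)\sigma\}$, which substituted into the first displayed estimate yields the claimed bound. The main obstacle is identifying the Hilbert--Schmidt orthogonal decomposition used in the lemma; once that is in hand, the proof avoids the spurious factor of $\sqrt{M}$ that a naive triangle-inequality telescoping for $(I - Q)\sqrt\sigma$ would introduce, and gives the desired $2\sqrt{\sum_m \tr\{(\eins - P_m)\sigma\}}$ without further effort.
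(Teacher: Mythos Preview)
Your argument is correct. The reduction to $2\|(\eins-Q)\sqrt\sigma\|_2$ via the difference-of-squares factoring and reverse triangle inequality is sound, and the inductive lemma is the key step; the Hilbert--Schmidt orthogonality $\langle(\eins-P_1)X,\,P_1Y\rangle_{\mathrm{HS}}=0$ holds precisely because $P_1$ is an \emph{orthogonal} projection (so $(\eins-P_1)^\ast P_1=(\eins-P_1)P_1=0$), and the Pythagorean identity then avoids the $\sqrt{M}$ loss exactly as you describe. Applying the lemma with $X=\sqrt\sigma$ and using $(\eins-P_m)^2=\eins-P_m$ closes the proof.

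Note, however, that the paper does not contain its own proof of this statement: the theorem is merely \emph{quoted} from the literature (with citation to \cite{wilde}, and historical remarks pointing to \cite{aaronson,sen,gao}) and then used as a black box in the direct part of Theorem~\ref{theorem:main-result-II}. So there is nothing to compare against in the paper itself. Your proof is essentially the standard one---the Hilbert--Schmidt version of Sen's vector-level argument---and recovers exactly the constant $2$ stated in the version the paper cites.
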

A first version of the bound was published in \cite{aaronson} and then improved by Sen in \cite{sen}, Wilde \cite{wilde} and Gao \cite{gao}. The work \cite{sen} of Sen showed for the first time how to use a (sequential) von-Neumann measurement at the decoder to achieve the capacity of a cq-channel. Sen's original method of proof \cite{sen} uses an approximation step that transforms the channel first and then applies the sequential decoder to that channel, after which it is shown that this yields an asymptotically optimal result also for the original channel. We save these two approximation steps and proceed much more directly: Let $(\tilde\rho_{s,x})_{s\in\bS,x\in\bX}$ define the cq-channel. Let a finite alphabet $\bU$ be given, and a conditional distribution $(v(\cdot|s,u))_{s\in\bS,u\in\bU}$ together with a distribution $p_\bU\in\mathfrak P(\bU)$. Without loss of generality, $\beta(p_\bU)>0$. Define $\rho_u:=\sum_{s\in\bS}\sum_{x\in\bX}v(x|s,u)p_\bS(s)\rho_{x,s}$ and $\bar\rho:=\sum_{u\in\bU}p_\bU(u)\sigma_u$. Let $n\in\nn$ and $\beta(p_\bU)>|\bU|\cdot\delta>0$. For every $u^n\in\bU^n$, let $P(u^n)$ be the projection as defined in (\ref{eqn:def-of-decoding-projection}) and $P'(u^n):=\eins-P(u^n)$. Choose $M\in\nn$ codewords $u^n\in\bU^n$ i.i.d. according to $p'_\bU$, where
\begin{align}
p'_\bU(u^n):=p_\bU^{\otimes n}(u^n)\cdot\eins_{T_{p_\bU,\delta}^n}(u^n)\cdot|p_\bU^{\otimes n}(T_{p_\bU,\delta}^n)|^{-1}.
\end{align}
For every random choice of sequences $(u^n_1,\ldots,u^n_M)$ let these sequences together with the POVM defined by $D_1:=P(u^n_1)$ and for all $m\geq2$ by $D_m:=P'(u^n_1)\ldots P'(u^n_{m-1})P(u^n_m)P'(u^n_{m-1})\ldots P'(u^n_1)$, $m=1,\ldots,M$, form the random choice of code $\mathcal K_n$. In order to make clear that $D_1,\ldots,D_M$ forms a POVM we consider an arbitrary but fixed choice of $(u^n_{m})_{m=1}^{M}$. We introduce the abbreviation $P_m:=P(u^n_m)$ and $P'_m:=\eins-P_m$. It is easily seen that for all $m\geq3$ it holds
\begin{align}
D_{m-1}+D_{m}&=P_1'\ldots P_{m-2}'P_{m-1}P_{m-2}'\ldots P_1'+P_1'\ldots P_{m-1}'P_mP_{m-1}'\ldots P_1'\\
&\leq P_1'\ldots P_{m-3}'P_{m-2}'P'_{m-3}\ldots P_1'\\
&=:D_{m-2}'.
\end{align}
In addition to that we have, for every $m$,
\begin{align}
D_m+D_{m}'&=D_{m-1}'.
\end{align}
It follows that
\begin{align}
\sum_{m=1}^MD_m&\leq\sum_{m=1}^{M-2}D_m+D_{m-2}'\\
&=\sum_{m=1}^{M-3}D_m+D_{m-3}'\\
&=\ldots\\
&= D_1+D_1'\\
&=P_1+(\eins-P_1)\\
&=\eins.
\end{align}
This implies that the operators $D_1,\ldots,D_M$ constitute a (sub-normalized, but this causes no problem as adding the measurement operator $D_0:=\eins-\sum_{m=1}^MD_m$ together with an arbitrary codeword can only decrease the average error of the code) POVM. The expected error over the random choice of code is upper bounded as
\begin{align}
\mathbb E\mathrm{err}(\cdot)&=1-\sum_{u^n_1,\ldots,u^n_M}\prod_{i=1}^Mp^n_\bU(u^n_i)\frac{1}{M}\sum_{m=1}^M\tr\{\prod_{k=1}^{m-1}D_m\rho_{u^n_m}\}\\
&\leq1+\sum_{u^n_1,\ldots,u^n_M}\prod_{i=1}^Mp^n_\bU(u^n_i)\frac{1}{M}\sum_{m=1}^M\left(2\sqrt{\sum_{k=1}^{m-1}\tr\{P(u^n_k)\rho_{u^n_m}\}+\tr\{P'(u^n_m)\rho_{u^n_m}}\}-1\right)\\
&\leq\frac{1}{M}\sum_{m=1}^M2\sqrt{\sum_{u^n_1,\ldots,u^n_m}\prod_{i=1}^{m}p'_\bU(u^n_i)\left(\sum_{k=1}^{m-1}\tr\{P(u^n_k)\rho_{u^n_m}\}+\tr\{P'(u^n_m)\rho_{u^n_m}\}\right)}\\
&\leq\frac{1}{M}\sum_{m=1}^M2\sqrt{\frac{1}{1-2^{-n\delta/2}}\sum_{u^n_1,\ldots,u^n_{m-1}}\prod_{i=1}^{m-1}p^n_\bU(u^n_i)\sum_{k=1}^{m-1}\tr\{P(u^n_k)\bar\rho^{\otimes n}\}+\sum_{u^n}p_\bU'(u^n)\tr\{P'(u^n)\rho_{u^n}\}}\\
&\leq\frac{1}{M}\sum_{m=1}^M2\sqrt{\frac{1}{1-2^{-n\delta/2}}\sum_{u^n_1,\ldots,u^n_{m-1}}\prod_{i=1}^{m-1}p^n_\bU(u^n_i)\sum_{k=1}^{m-1}\tr\{P(u^n_k)\bar\rho^{\otimes n}\}+2^{-n\delta/2}}
\end{align}
where we have used the non-commutative union bound, Jensen's inequality and the estimate $tr\{P'(u^n)\rho_{u^n}\}\leq2^{-n\delta/2}$ from inequality (\ref{eqn:estimate-1}). It further follows by trivial modifications of the estimates (\ref{eqn:upper-bound-on-POVM-versus-average-state}) that there exists a function $\kappa':(0,1/2)\to\mathbb R_+$ satisfying $\lim_{\delta\to0}\kappa'(\delta)=0$ such that for all large enough $n$ (depending on $\delta$) we have
\begin{align}
\mathbb E\mathrm{err}(\cdot)&\leq2\sqrt{\frac{1}{1-2^{-n\delta/2}}M\cdot2^{-n(\chi(p_\bU,W_{\bS\times\bX\to\mathcal K}\circ V)-\kappa'(\delta)}+2^{-n\delta/2}},
\end{align}
and that clearly demonstrates that all rates below $\chi(p_\bU,W_{\bS\times\bX\to\mathcal K}\circ V)$ are achievable.
\end{proof}
\begin{proof}[Converse part of Theorem \ref{theorem:main-result-II}]
Let $(\mathcal K_n)_{n\in\nn}$ be a sequence of codes such that for all $n\in\nn$
\begin{align}
\frac{1}{M_n}\sum_{m=1}^{M_n}\sum_{s^n\in\bS^n}p^{\otimes n}(s^n)\sum_{x^n\in\bX^n}e(x^n|m,s^n)\tr\{\rho_{s^n,x^n}D_m\}=1-\eps_n
\end{align}
for some sequence $(\epsilon_n)_{n\in\mathbb N}$ of nonnegative numbers satisfying $\lim_{n\to\infty}\epsilon_n=0$.
Define the random variables $(\mathfrak M_n,\mathfrak S^n,\mathfrak X^n,\hat{\mathfrak{M^n}})$ taking values in $K_n\times\bS^n\times\bX^n\times K^n$ via their distributions
\begin{align}
\mathbb P((\mathfrak M_n,\mathfrak S^n,\mathfrak X^n,\hat{\mathfrak M_n})=(m,s^n,x^n,\hat m))=p^{\otimes n}(s^n)\frac{1}{M_n}e(x^n|m,s^n)\tr\{D_{\hat m}\rho_{s^n,x^n}\}.
\end{align}
Fano's inequality implies that for all large enough $n\in\nn$ it holds
\begin{align}
H(\mathfrak M_n|\hat{\mathfrak M_n})\leq n\cdot\epsilon_n\cdot|\bX|.
\end{align}
From there we conclude (by noting that $\log(K_n)=H(\mathfrak M_n)$ that
\begin{align}
\log(M_n)&\leq I(\mathfrak M_n;\hat{\mathfrak M_n})+n\cdot\epsilon_n\cdot|\bX|\\
&\leq \chi(\mathfrak M_n;Q^n)+n\cdot\epsilon_n\cdot|\bX|
\end{align}
where the ensemble under consideration is given by $(\frac{1}{M_n},\sum_{x^n\in\bX^n}\sum_{s^n\in\bS^n}e(x^n|m,s^n)p^{\otimes n}(s^n)W_{s^n}(x^n))_{m=1}^{M_n}$ and we employed the Holevo bound. At this point, it is convenient to write the Holevo information in terms of the quantum mutual information: Let the overall state of the system be
\begin{align}
\sigma:=\sum_{m,\hat m\in[M_n]}\frac{1}{M_n}\sum_{s^n\in\bS^n}p^{\otimes n}(s^n)\cdot\psi_m\otimes\psi_{s^n}\otimes e(x^n|m,s^n)\psi_{x^n}\otimes\rho_{s^n,x^n}\otimes\tr\{D_{\hat m}\rho_{s^n,x^n}\}\psi_{\hat m},
\end{align}
where for convenience we embedded the classical variables into quantum systems by using orthogonal rank-one projections $\psi_i$ (meaning e.g. that each $\psi_m\in\mathcal B(\mathbb C^{M_n})$ satisfies $\eins\geq\psi_m\geq0$, $\tr\{\psi_m\}=1$, $\psi_m=\psi_m^\dag$ and $\psi_m^2=\psi_m$). This notation allows us to write the Holevo information as a standard quantum mutual information, a fact that we utilize in order to keep track of the dependencies between the various systems and subsystems that show up during our proof. As a first step, let us write
\begin{align}
\log(M_n)&\leq I(\mathfrak M_n;Q^n)+n\cdot\epsilon_n\cdot|\bX|\\
&=\sum_{i=1}^n I(\mathfrak M_n;Q_i|Q^{i-1})+n\cdot\epsilon_n\cdot|\bX|\\
&\leq\sum_{i=1}^n I(\mathfrak M_n,Q^{i-1};Q_i)+n\cdot\epsilon_n\cdot|\bX|.
\end{align}
Here the last inequality follows from $S(AB)\leq S(A)+S(B)$ (subadditivity of von Neumann entropy) and the equality by definition of conditional quantum mutual information as $I(A;B|C):=S(AC)+S(BC)-S(ABC)+S(C)$ and a telescope sum argument. We continue with our upper bound by noting that quantum mutual information obeys the data processing inequality \cite[Corollary 11.9.4]{wilde-book}, which allows us to loosen our bound as
\begin{align}
\log(M_n)&\leq\sum_{i=1}^n I(\mathfrak M_n,Q^{i-1},\mathfrak S^{i-1};Q_i)+n\cdot\epsilon_n\cdot|\bX|\\
&\leq\sum_{i=1}^n I(\mathfrak M_n,Q^{i-1},\mathfrak S^{i-1},\mathfrak X^{i-1};Q_i)+n\cdot\epsilon_n\cdot|\bX|.
\end{align}
At this point, it is possible to use the structure of causal codes in order to relief us from the problematic term $Q^{i-1}$. For every $i\in[n]$, write
\begin{align}
I(\mathfrak M_n,Q^{i-1},\mathfrak S^{i-1},\mathfrak X^{i-1};Q_i)&=S(\mathfrak M_n,Q^{i-1},\mathfrak S^{i-1},\mathfrak X^{i-1})+S(Q_i)-S(\mathfrak M_n,Q^{i-1},\mathfrak S^{i-1},\mathfrak X^{i-1},Q_i)\\
&=S(\mathfrak M_n,\mathfrak S^{i-1},\mathfrak X^{i-1})+S(Q^{i-1}|\mathfrak M_n,\mathfrak S^{i-1},\mathfrak X^{i-1})+S(Q_i)\\
&\nonumber\qquad-S(\mathfrak M_n,\mathfrak S^{i-1},\mathfrak X^{i-1},Q_i)-S(Q^{i-1}|\mathfrak M_n,\mathfrak S^{i-1},\mathfrak X^{i-1},Q_i)\\
&=I(\mathfrak M_n,\mathfrak S^{i-1},\mathfrak X^{i-1};Q_i)+S(Q^{i-1}|\mathfrak M_n,\mathfrak S^{i-1},\mathfrak X^{i-1})\\
&\nonumber\qquad\qquad\qquad\qquad\qquad\qquad-S(Q^{i-1}|\mathfrak M_n,\mathfrak S^{i-1},\mathfrak X^{i-1},Q_i)\\
&=I(\mathfrak M_n,\mathfrak S^{i-1},\mathfrak X^{i-1};Q_i)+S(Q^{i-1}|\mathfrak M_n,\mathfrak S^{i-1},\mathfrak X^{i-1})\\
&\nonumber\qquad-S(Q^{i-1}Q_i|\mathfrak M_n,\mathfrak S^{i-1},\mathfrak X^{i-1})+S(Q_i|\mathfrak M_n,\mathfrak S^{i-1},\mathfrak X^{i-1})\\
&=I(\mathfrak M_n,\mathfrak S^{i-1},\mathfrak X^{i-1};Q_i).
\end{align}
Most of the above equalities follow trivially from the definition of relative entropy. Even the last one holds for a non-causal encoder as well.\\
It is still worth noting that the system $Q^{i-1}Q_i$ is in a product state given the classical data $(m,s^{i-1},x^{i-1})$. That this is so is a consequence of the fact that causality is respected at the encoder. More precisely, it holds by definition of the encoder that
\begin{align}
e_i((x^{i-1},x_i)|m,(s^{i-1},s_i))=e_{i-1}(x^{i-1}|m,s^{i-1})\tilde e(x_i|m,(s^{i-1},s_i))
\end{align}
for an appropriately defined $\tilde e(m,s^i)\in\mathfrak P(\bX)$. Therefore, the system $Q^{i-1}Q_i$ has the following state \emph{given} $(m,s^{i-1},x^{i-1})$:
\begin{align}
\rho_{s^{i-1},x^{i-1}}\otimes\left(\sum_{s_i\in\bS}\sum_{x_i\in\bX}\tilde e(x_i|m,(s^{i-1},s_i))p(s_i)\rho_{s_i,x_i}\right).
\end{align}
We thus get the upper bound
\begin{align}
\log(K_n)\leq\sum_{i=1}^nI(\mathfrak M_n,\mathfrak S^{i-1},\mathfrak X^{i-1};Q_i)+n\cdot\epsilon_n\cdot|\bX|,
\end{align}
and setting $U_i:=(\mathfrak M_n,\mathfrak S^{i-1})$ this can be written as
\begin{align}
\log(K_n)\leq\sum_{i=1}^nI(U_i,\mathfrak X^{i-1};Q_i)+n\cdot\epsilon_n\cdot|\bX|.
\end{align}
Of course this implies the existence of at least one $i\in[n]$ such that
\begin{align}
\frac{1}{n}\log(K_n)\leq I(U_i,\mathfrak X^{i-1};Q_i)+\epsilon_n\cdot|\bX|.
\end{align}
The structure of the classical random variables involved here is such that
\begin{align}
\mathbb P((U_i,\mathfrak X^{i-1},S_i,X_i)=(m,s^{i-1},x^{i-1},s_i,x_i))=\frac{p^{\otimes (i-1)}(s^{i-1})}{M}p(s_i)e_{i-1}(x^{i-1}|s^{i-1},m)\tilde e(x_i|m,(s^{i-1},s_i)),
\end{align}
and since $\mathfrak X^{i-1}$ is only dependent on $U_i$ here, it follows that
\begin{align}
\frac{1}{n}\log(K_n)\leq I(U_i;Q_i)+\epsilon_n\cdot|\bX|.
\end{align}
The distribution of $(S_i,U_i,X_i)$ is such that with an appropriate choice of $\tilde p\in\mathfrak P(\bU)$ where $\bU:=[M_n]\times\bS^{i-1}$ we have for all $u\in\bU$, $s_i\in\bS$ and $x_i\in\bX$ that
\begin{align}
\mathbb P((S_i,U_i,X_i)=(s_i,u_i,x_i))=\tilde p(u)p(s_i)\tilde e(x_i|u,s_i)
\end{align}
such that the theorem is proven by taking the limit $n\to\infty$ and by noting that we can define a channel $V\in Ch(\bU,\bS\times\bX)$ by setting $v(s,x|u):=\tilde e(x|s,u)p(s)$ for all $s\in\bS$, $u\in\bU$ and $x\in\bX$ and that under this assumption and with the state under consideration having the form
\begin{align}
\sigma_i=\sum_{s\in\bS}\sum_{u\in\bU}\sum_{x\in\bX}\tilde p(u)p(s)\tilde e(x|u,s)\psi_s\otimes\psi_u\otimes\psi_x\otimes\rho_{s,x}
\end{align}
which is clearly classical-quantum over the cut between $\bU$ and the other systems we get
\begin{align}
I(U_i;Q_i)&=S(\sum_{s\in\bS}\sum_{u\in\bU}\sum_{x\in\bX}\tilde p(u)p(s)e(x|u,s)\rho_{s,x})-\sum_{u\in\bU}\tilde p(u)S(\sum_{s\in\bS}\sum_{x\in\bX}p(s)\tilde e(x|u,s)\rho_{s,x})\\
&=S(W_{\bS\times\bX\to\kr}\circ V(\tilde p))-\sum_{u\in\bU}\tilde p(u)S(W_{\bS\times\bX\to\kr}\circ V(u))\\
&=\chi(\tilde p,W_{\bS\times\bX\to\kr}\circ V).
\end{align}
\end{proof}
\begin{proof}[Converse part of Theorem \ref{theorem:main-result}] Let $(\mathcal K_n)_{n\in\nn}$ be a sequence of codes such that for all $n\in\nn$
\begin{align}
\frac{1}{M_n}\sum_{m=1}^{M_n}\sum_{s^n\in\bS^n}p^{\otimes n}(s^n)\sum_{x^n\in\bX^n}e(x^n|m,s^n)\tr\{W_{s^n}(x^n)D_m\}=1-\eps_n
\end{align}
for some sequence $(\eps_n)_{n\in\nn}$ of nonnegative numbers satisfying $\limsup_{n\to\infty}\eps_n=0$. Also, assume that $\log K_n=R-\eps_n$ for all $n\in\nn$. Then, define the random variables $(\mathfrak M_n,\mathfrak S^n,\mathfrak X^n,\hat{\mathfrak{M_n}})$ taking values in $[K_n]\times\bS^n\times\bX^n\times[K_n]$ via their distributions
\begin{align}
\mathbb P((\mathfrak M_n,\mathfrak S^n,\mathfrak X^n,\hat{\mathfrak M_n})=(m,s^n,x^n,\hat m))=p^{\otimes n}(s^n)\frac{1}{M_n}e(x^n|m,s^n)\tr\{D_{\hat m}W_{s^n}(x^n)\}.
\end{align}
Then by Fano's inequality we have that, for all large enough $n\in\nn$, we get the upper bound
\begin{align}
H(\mathfrak M_n|\hat{\mathfrak M_n})\leq\epsilon_n\cdot|\mathbf X|.
\end{align}
From there we conclude (by noting that $\log(K_n)=H(\mathfrak M_n)$ that
\begin{align}
I(\mathfrak M_n;\hat{\mathfrak M_n})\geq\log(K_n)-\epsilon_n\cdot|\mathbf X|.
\end{align}
Also, since $\mathfrak M_n$ and $\mathfrak S^n$ are independent, we get
\begin{align}
I(\mathfrak M_n;\hat{\mathfrak M_n})-I(\mathfrak M_n;\mathfrak S^n)\geq \log(K_n)-\epsilon_n\cdot|\mathbf X|.
\end{align}
From the Holevo bound we can then conclude that, using a quantum system in the overall state
\begin{align}
\sigma:=\sum_{m,\hat m\in[M_n]}\frac{1}{M_n}\sum_{s^n\in\bS^n}p^{\otimes n}(s^n)\cdot\psi_m\otimes\psi_{s^n}\otimes e(x^n|m,s^n)\psi_{x^n}\otimes\rho_{s^n,x^n}\otimes\tr\{\Lambda_{\hat m}\rho_{s^n,x^n}\}\psi_{\hat m}
\end{align}
where we remind the reader that $\psi_i:=|i\rangle\langle i|$ is used as a shorthand for the orthogonal pure states corresponding to the realizations of certain random variables. While there is no strict necessity to do so, we use the standard embedding $\mathfrak P(\bA)\ni r\mapsto \sum_{a\in\bA}r(a)\psi_a$ in order to embed the overall state into a complete quantum system. We then have
\begin{align}
\log(M_n)&\leq \chi(\mathfrak M_n;Q^n)-I(\mathfrak M_n;\mathfrak S^n)+\epsilon_n\cdot|\mathbf X|\\
&= \chi(p_\bU;W_{\bU\to\kr^{\otimes n}})-I(\mathfrak U;\mathfrak S^n)+\epsilon_n\cdot|\mathbf X|.
\end{align}
Here, we simply set $\mathfrak U:=\mathfrak K_n$ in order to make this bound look more familiar. We then define the set
\begin{align}
\mathcal U:=\{\sigma:\sigma=\sum_{u\in\bU}q(u,x^n|s^n)\sum_{s^n\in\bS^n}p^{\otimes n}(s^n)\cdot\psi_u\otimes\psi_{s^n}\otimes\rho_{s^n,x^n}\}
\end{align}
and observe that the state $\sigma_{U',S^n,Q^n}:=\tr_{\hat{\mathfrak M}_n}\{\sigma_{U',S^n,Q^n,\hat K^n}$ is contained in $\mathcal U$ with the special choice $q(u',\tilde x^n|s^n):=\delta(\tilde x^n,x^n(u',s^n))\cdot(1/M_n)$. This produces (for all large enough $n\in\mathbb N$) the upper bound
\begin{align}
\log(M_n)&\leq \max_{p_{\bS^n\bU\bX^n}\in A_n}(\chi(p_{\bU};W_{\bU\to\kr^{\otimes n}})-I(\mathfrak U;\mathfrak S^n))+\epsilon_n\cdot|\mathbf X|.
\end{align}
Clearly, the validity of such an upper bound produces a multi-letter converse. Since we have a single-letter direct part we can use the usual blocking arguments in order to match the upper bound. So, at least it seems that we have a complete coding result.
\end{proof}
\begin{proof}[Cardinality bounds and structure of optimizers]
Let us first consider the case of causal information at the encoder. Assume that the optimization is carried out on an alphabet $\bU'$ of size $|\bU'|\geq|\bX|\cdot|\bS|+1$.\\
Observe that, since the encoding is given by stochastic matrices $\tilde v(\cdot|s,u)$, the following is true: If $q'\in\mathfrak P(\bU')$ together with some $\tilde v$ is any solution to the optimization problem of Theorem \ref{theorem:main-result-II} then it holds for all $s\in\bS$ that the $\bS$-marginal of the solution $p_{\bS\bU'\bX}$ defined via $p_{\bS\bU'\bX}(s,u',x):=p(s)q'(u')\tilde v(x|s,u')$ for all $s\in\bS$, $u'\in\bU'$ and $x\in\bX$ satisfies
\begin{align}\label{eqn:p>0}
p_{\bS|\bU'}(s|u')&=p(s)\geq\beta(p).
\end{align}
Now define for each $x\in\bX$ and $s\in\bS$ a function $f_{s,x}:\{r\in\mathfrak P(\bS\times\bX):r_\bS(s)\geq\beta(p)\ \forall s\in\bS\}\to\mathbb R_+$ by $f_{s,x}(r):=p(s)\cdot r(s,x)/r_\bS(s)$. The domain of each $f_{s,x}$ is a convex and compact subset of $\mathfrak P(\bS\times\bX)$. Then we see that inequality (\ref{eqn:p>0}) implies that for all $s\in\bS$, $u\in\bU'$ and $x\in\X$
\begin{align}
f_{s,x}(p_{\bS\bX|\bU'}(\cdot|u'))=p(s)\cdot p_{\bS\bX|\bU'}(s,x|u')/p_{\bS|\bU'}(s|u')&=p_{\bS\bX|\bU'}(s,x|u')
\end{align}
holds, a fact which we will need soon. Define $f:\{r\in\mathfrak P(\bS\times\bX):r_\bS(s)\geq\beta(p)\ \forall s\in\bS\}\to\mathbb R_+$ by $f(r):=S(\sum_{s,x}f_{s,x}(r)\rho_{s,x})$. Let $q'\in\mathfrak P(\bU')$ and $p_{\bS\bX|\bU'}$ solve the optimization problem on $\bU'$, meaning that
\begin{align}
\max_{q\in\mathfrak P(\bU)}\max_{V\in Ch_p(\bU,\bS\times\bX)}\chi&(q,W_{\bS\times\bX\to\kr}\circ V))\\
&=S(\sum_{u',s,x}q'(u')p_{\bS\bX|\bU'}(s,x|u')\rho_{s,x})-\sum_{u'}q'(u')S(\sum_{s,x}p_{\bS\bX|\bU'}(s,x|u')\rho_{s,x})\nonumber\\
&=S(\sum_{u'}q'(u')\sum_{s,x}f_{s,x}(p_{\bX|\bS\bU'}(\cdot|u'))\rho_{s,x})-\sum_{u'}q'(u')f(p_{\bS\bX|\bU'}(\cdot|u')).
\end{align}
According to e.g. the proof of \cite[Lemma 3]{ahlswede-koerner} (which needs only compactness of the domain of the $f_{s,x}$ and of $f$), there exists a set $\bU$ of cardinality bounded by $|\bU|\leq|\bS|\cdot(|\bX|-1)+2$ (note here that for each $s\in\bS$ one of the $f_{s,x}$ does not have to be 'pinned' here due to normalization) and a $q\in\mathfrak P(\bU)$ and a conditional probability distribution $p_{\bS\bX|\bU}$ such that for all $s\in\bS$ and $x\in\bX$ it holds
\begin{align}
\sum_{u'}q'(u')f_{s,x}(p_{\bS\bX|\bU'}(\cdot|u'))&=\sum_{u\in\bU}q(u)f_{s,x}(p_{\bS\bX|\bU}(\cdot|u))\\
\sum_{u'}q'(u')f(p_{\bS\bX|\bU'}(\cdot|u'))&=\sum_{u\in\bU}q(u)f(p_{\bS\bX|\bU}(\cdot|u)).
\end{align}
This implies that
\begin{align}
\max_{q\in\mathfrak P(\bU')}\max_{V\in Ch_p(\bU,\bS\times\bX)}\chi&(q,W_{\bS\times\bX\to\kr}\circ V))\\
&=S(\sum_{u,s,x}q(u)f_{s,x}(p_{\bS\bX|\bU}(\cdot|u))\rho_{s,x})-\sum_{u}q(u)S(\sum_{s,x}f_{s,x}(p_{\bS\bX|\bU}(\cdot|u))\rho_{s,x})\nonumber.
\end{align}
This proves that $(s,x,u)\mapsto q(u)f_{s,x}(p_{\bS\bX|\bU}(\cdot|u))$ is a solution to the optimization problem (note that the marginal on $\bS\times\bU$ has to be of product form by definition of the optimization problem for the causal case) as well which additionally satisfies the bound $|\bU|\leq|\bS|(|\bX|-1)+2$.
\\\\
The case of non-causal state information can be treated completely similar to the one above: Assume that $n=1$ for the start. This time we do not have to ensure that $(S,U)$ are independent so it is enough to define functions $f_{x,s}:\mathfrak P(\bS\times\bX)\to\mathbb R_+$ via $f_{s,x}(r):=r(s,x)$, $f(r):=S(\sum_{s,x}r(s,x)\rho_{s,x})$ and $g(r):=H(r_\bS)$, then identical arguments produce the bound
\begin{align}
|\bU|\leq|\bS|\cdot(|\bX|+1).
\end{align}
Thus for every $n\in\nn$ it trivially holds that $|\bU_n|\leq|\bS|^n\cdot(2\cdot|\bX|)^n$.
\end{proof}
\end{section}
\begin{section}{Appendix\label{sec:Appendix}}
\begin{lemma}[C.f. \cite{bbt-avc}\label{lemma:types-are-dense}]
Let $p\in\mathcal P(\mathcal X)$. For every $n\geq|\mathcal X|^2$, there is $p'\in\mathcal P_0^n(\mathcal X)$ such that
\begin{align}
\|p-p'\|_1\leq\frac{2|\mathcal X|}{n}
\end{align}
and $p(x)=0$ implies $p'(x)=0$ for all $x\in\mathcal X$.
\end{lemma}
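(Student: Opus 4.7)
The plan is to construct $p'$ by rounding $np(x)$ to an integer for each $x$ in the support of $p$, then making a small correction so that the integers sum to $n$.

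First, I would restrict attention to the support $S := \{x \in \mathcal X : p(x) > 0\}$. For $x \notin S$ I simply set $p'(x) := 0$, which automatically takes care of the requirement that $p(x)=0$ implies $p'(x)=0$. For $x \in S$, define the integer $k(x) := \lfloor n p(x) \rfloor$. Then compute the deficit
\begin{align}
d := n - \sum_{x \in S} k(x) = \sum_{x \in S} \bigl( n p(x) - \lfloor n p(x) \rfloor \bigr).
\end{align}
Since the right-hand side is a sum of fractional parts, each lying in $[0,1)$, we have $0 \le d < |S|$, and since $d$ is an integer (the left-hand side is manifestly integral because $n \in \mathbb N$ and $\sum_x p(x) = 1$), one obtains $d \le |S| - 1 \le |\mathcal X| - 1$.

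Next, I would dispose of the deficit by selecting any $d$ distinct elements of $S$ and incrementing their counts by one. This produces non-negative integers $k'(x)$ with $\sum_{x \in S} k'(x) = n$ and $|k'(x) - n p(x)| \le 1$ for every $x \in S$. Setting $p'(x) := k'(x)/n$ for $x \in S$ then puts $p'$ into $\mathcal P_0^n(\mathcal X)$, and the per-coordinate bound gives
\begin{align}
\|p - p'\|_1 \;=\; \sum_{x \in S} |p(x) - p'(x)| \;\le\; \frac{|S|}{n} \;\le\; \frac{|\mathcal X|}{n} \;\le\; \frac{2|\mathcal X|}{n}.
\end{align}
The hypothesis $n \ge |\mathcal X|^2$ is not actually needed for the construction to go through — it only buys a comfortable cushion and ensures the bound $2|\mathcal X|/n$ is meaningfully small; the essential requirement is just that $d \le |S|$ so that we can find enough distinct elements of the support to absorb the deficit, which holds automatically.

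There is no real obstacle here; the only point that merits care is justifying that the deficit $d$ is at most $|S|-1$, which follows because the sum of the fractional parts of $n p(x)$ is both an integer (since the $np(x)$ sum to $n$) and strictly less than $|S|$. The support-preservation condition is evident from the construction since all modifications of the counts take place within $S$.
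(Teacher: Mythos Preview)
Your argument is correct, and in fact sharper than what is needed: rounding every $np(x)$ down and redistributing the integer deficit $d\le|S|-1$ among distinct support points gives the bound $\|p-p'\|_1\le|S|/n\le|\mathcal X|/n$, and the hypothesis $n\ge|\mathcal X|^2$ plays no role.

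The paper's proof takes a different route. It singles out one coordinate of maximal mass (so $p(|\mathcal X|)\ge 1/|\mathcal X|$), rounds each of the remaining $|\mathcal X|-1$ coordinates to within $1/n$ in either direction, and then \emph{defines} $p'(|\mathcal X|):=1-\sum_{i<|\mathcal X|}p'(i)$ to enforce normalization. This last entry might in principle be negative, and it is precisely here that the hypothesis $n\ge|\mathcal X|^2$ is invoked: it guarantees $p'(|\mathcal X|)\ge 1/|\mathcal X|-|\mathcal X|/n\ge0$. The paper's construction thus genuinely needs the extra assumption, and its $\ell_1$ bound comes out as $2(|\mathcal X|-1)/n$ because the error on the last coordinate is only controlled via the triangle inequality from the others. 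Your floor-and-redistribute construction avoids this asymmetry between coordinates, which is why it yields a tighter constant and removes the need for the lower bound on $n$.
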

\begin{proof}[Proof of Lemma \ref{lemma:types-are-dense}]
Let $n\in\mathbb N$ be arbitrary. Set $\mathcal X':=\{x\in\mathcal X:p(x)>0\}$. From the next lines it will follow that, without loss of generality, we may assume $\mathcal X=\mathcal X'$. For sake of simplicity, assume again without loss of generality that $\mathcal X=\{1,\ldots,|\mathcal X|\}$ and that $p(|\mathcal X|)\geq 1/|\mathcal X|$. Choose $p'(i)$, for $i=1,\ldots,|\mathcal X|-1$, such that $|p'(i)-p(i)|\leq\frac{1}{n}$. Clearly, this is possible. Then necessarily $p'(|\mathcal X|)=1-\sum_{i=1}^{|\mathcal X|-1}p'(i)$ and
\begin{align}
\|p-p'\|_1&\leq\sum_{i=1}^{|\mathcal X|-1}\frac{1}{n}+|p'(|\mathcal X|)-p(|\mathcal X|)|\\
&=\frac{|\mathcal X|-1}{n}+|\sum_{i=1}^{|\mathcal X|-1}p(i)-p'(i)|\\
&\leq\frac{|\mathcal X|-1}{n}+\sum_{i=1}^{|\mathcal X|-1}|p(i)-p'(i)|\\
&\leq\frac{2|\mathcal X|}{n}.
\end{align}
Of course, while all the $p'(i)\geq0$ by construction if $i<|\mathcal X|$, this does not hold for $p'(|\mathcal X|)$. This is where we need the additional condition that $n\geq|\mathcal X|^2$:
\begin{align}
p'(|\mathcal X|)&=1-\sum_{i=1}^{|\mathcal X|-1}p'(i)\\
&\geq1-\sum_{i=1}^{|\mathcal X|-1}p(i)-\frac{|\mathcal X|-1}{n}\\
&\geq p(|\mathcal X|)-\frac{|\mathcal X|}{n}\\
&\geq\frac{1}{|\mathcal X|}-\frac{|\mathcal X|}{n}\\
&\geq0.
\end{align}
\end{proof}
\begin{lemma}\label{lemma:existence-of-distribution-if-marginal-is-correct}
Let $p\in P_0^n(\bU)$, $q\in\mathfrak P(\bS)$ and $p_{\bS\bU}\in\mathfrak P(\bS\times\bU)$ be any distribution such that $p_\bU=p$ and $p_\bS=q$. If $\delta<\frac{1}{2}\beta(p_{\bU\bS})$ and $n>4\cdot|\bU|\cdot\max\{|\bS|,1/\beta\}$ then for every $s^n$ satisfying $\|\bar N(\cdot|s^n)-q\|\leq\delta$ there exists $u^n\in T_p$ such that $\|\bar N(\cdot|s^n,u^n)-p_{\bS\bU}\|\leq2\delta$.
\end{lemma}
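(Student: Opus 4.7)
The plan is a two-step construction: first build a real-valued joint distribution $\hat p \in \mathfrak P(\bS\times\bU)$ with the correct marginals and close to $p_{\bS\bU}$, then round it to an integer joint type that is realizable by some $u^n\in T_p$.

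Let $r:=\bar N(\cdot|s^n)$, so $\|r-q\|_1\leq\delta$ and $n_s:=nr(s)$ is an integer for every $s$. Since $\beta(p_{\bS\bU})>0$ forces $q(s)>0$, the conditional distribution $p_{\bU|\bS}(u|s):=p_{\bS\bU}(s,u)/q(s)$ is well defined, and I would set
\[
\tilde p(s,u):=r(s)\,p_{\bU|\bS}(u|s), \qquad \hat p(s,u):=\tilde p(s,u)+r(s)\bigl(p(u)-\tilde p_{\bU}(u)\bigr),
\]
where $\tilde p_{\bU}(u):=\sum_{s'} r(s')\,p_{\bU|\bS}(u|s')$. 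A direct computation gives $\hat p_{\bS}=r$ and $\hat p_{\bU}=p$, and two applications of the triangle inequality yield
\[
\|\hat p-p_{\bS\bU}\|_1 \leq \|\tilde p-p_{\bS\bU}\|_1 + \|\hat p-\tilde p\|_1 = \|r-q\|_1 + \|\tilde p_{\bU}-p\|_1 \leq 2\|r-q\|_1 \leq 2\delta.
\]
The assumption $\delta<\beta/2$ is precisely what ensures $\hat p\geq 0$: the correction has magnitude at most $r(s)\delta$, while the leading term $r(s)\,p_{\bU|\bS}(u|s)\geq r(s)\beta>2r(s)\delta$.

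Next I would round $n\hat p$ to a non-negative integer matrix $m$ with row marginals $(n_s)_s$ and column marginals $(np(u))_u$. Since these marginals are integers and the bipartite incidence matrix of $\bS\times\bU$ is totally unimodular, the corresponding transportation polytope has integer vertices; a standard cycle-cancellation procedure on the support graph of $n\hat p$ then produces such an $m$ with $|m(s,u)-n\hat p(s,u)|<1$ at every entry. The sequence $u^n$ is built from $m$ directly: for each $s$, partition $I_s:=\{i:s_i=s\}$ (of size $n_s$) into blocks of sizes $m(s,u)$ for $u\in\bU$ and set $u_i:=u$ on the $u$-th block. Because $\sum_s m(s,u)=np(u)$, we have $u^n\in T_p$, and the joint counts satisfy $N(\cdot,\cdot|s^n,u^n)=m$.

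The main obstacle is the final book-keeping. A naive combination yields
\[
\|\bar N(\cdot|s^n,u^n)-p_{\bS\bU}\|_1 \leq \|\hat p-p_{\bS\bU}\|_1 + \|m/n-\hat p\|_1 \leq 2\delta + |\bS||\bU|/n,
\]
so the cycle-cancellation rounding contributes an additive $|\bS||\bU|/n$ on top of the $2\delta$ estimate from the first step. The hypothesis $n>4|\bU|\max\{|\bS|,1/\beta\}$ is tuned to absorb this excess, either by choosing the sign of each cycle cancellation greedily to \emph{decrease} $\|\cdot-np_{\bS\bU}\|_1$ rather than accept an arbitrary direction, or by replacing the first step with a sharper variant that only transports mass between rows (and thus achieves $\|\tilde\pi-p_{\bS\bU}\|_1\leq\delta$), leaving a budget of a further $\delta$ for the $O(1/n)$ rounding error. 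Apart from this numerical absorption, the construction of $\hat p$ is essentially forced by the marginal requirements, and the non-negativity check in the first step is the only place where the hypothesis $\delta<\beta/2$ is really used.
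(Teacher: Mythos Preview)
Your construction is correct in spirit but takes a genuinely different route from the paper, and the difference matters for the final constant.

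The paper does \emph{not} first build a continuous $\hat p$ with exact marginals and then invoke transportation-polytope integrality. Instead it factors $p_{\bS\bU}(s,u)=q(s)w(u|s)$ and, for each $s=1,\ldots,|\bS|-1$, uses Lemma~\ref{lemma:types-are-dense} to round $w(\cdot|s)$ to a type of denominator $N(s|s^n)$; the last row $N(|\bS|,\cdot)$ is then \emph{defined} as the slack $np(\cdot)-\sum_{s<|\bS|}N(s,\cdot)$, so the $\bU$-marginal is exactly $np$ by construction. The only thing left to check is non-negativity of the slack row, which is where $\delta<\beta/2$ and $n>4|\bU|/\beta$ are used. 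This elementary row-by-row scheme replaces your appeal to total unimodularity and cycle cancellation entirely.

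The payoff is in the bookkeeping. Because the paper never performs your marginal-correction step $\tilde p\mapsto\hat p$, it arrives at $\|\bar N-p_{\bS\bU}\|\leq\delta+O(|\bU||\bS|/n)$, leaving a full $\delta$ of budget to absorb the rounding error into the target $2\delta$. Your first step already spends $2\delta$, so the additional $|\bS||\bU|/n$ from rounding genuinely overshoots the stated bound; your two suggested fixes are not fully justified (greedy cycle cancellation need not decrease the $\ell_1$ distance to $p_{\bS\bU}$, since each cycle move changes several entries simultaneously), and your ``sharper first step that only transports mass between rows'' is in fact precisely the paper's slack-row construction. Note, however, that even the paper's own final inequality $\delta+O(1/n)\leq 2\delta$ tacitly requires $n$ large relative to $1/\delta$, which the stated hypothesis $n>4|\bU|\max\{|\bS|,1/\beta\}$ does not literally guarantee; this is a minor imprecision in the lemma as written rather than a defect in either argument, and is harmless in the application where $\delta$ is fixed and $n\to\infty$.
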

\begin{proof}
For sake of simplicity, let $\mathcal S=\{1,\ldots,S\}$. Let $\beta:=\beta(p_{\bS\bU})$. Let $p_{\bS\bU}(s,u)=q(s)w(u|s)$ and $\|\bar N(\cdot|s^n)-q\|\leq\delta$. Then for every $u,s$ we have $\bar N(s|s^n)w(u|s)\geq q(s)w(u|s)-\delta\cdot w(u|s)$. It follows that $\beta (p'_{SU})\geq\beta-\delta$. We may assume that $q(s)>0$ for all $s\in\bS$, and that $N(s)>0\ \forall s\in\bS$ since otherwise it holds $q^{\otimes n}(T_N)=0$. Now, for each $s=1,\ldots,S-1$, apply Lemma \ref{lemma:types-are-dense} to define a type $N(s,\cdot)$ on $\bU$ which satisfies $\|N(s,\cdot)\frac{1}{N(s)}-w(\cdot|s)\|\leq\frac{2\cdot|\bU|}{N(s)}$. It then holds for every $u\in\bU$ that
\begin{align}
|N(S,u)-nq(S)w(u|S)|&=|N(u)-\sum_{s=1}^{S-1}N(s,u)-nq(S)w(u|S)|\\
&\leq|N(u)-\sum_{s=1}^SN(s)w(u|s)-nq(S)w(u|S)|+2|\bU|\\
&\leq|N(u)-\sum_{s=1}^Snq(s)w(u|s)-nq(S)w(u|S)|+2|\bU|+n\delta\\
&=|N(u)-np_U(u)|+2|\bU|+n\delta\\
&=2|\bU|+n\delta.
\end{align}
Therefore, we have that for all $u\in\bU$
\begin{align}
N(S,u)\geq n(\beta-\delta)-2|\bU|.
\end{align}
Thus if $\delta<\beta/2$ and $n>4|\bU|/\beta$ the construction works. It remains to calculate the distance of $\bar N$ to $p_{\bS\bU}$:
\begin{align}
\|\bar N-p_{\bS\bU}\|&=\sum_{s=1}^{S-1}\bar N(s)\|N(s,\cdot)\frac{1}{N(s)}-w(\cdot|s)\|+\sum_{u\in\bU}|\bar N(S,u)-q(S)w(u|S)|\\
&\leq\frac{2|\bS||\bU|}{n}+\frac{2|\bU|}{n}+\delta\\
&\leq2\delta
\end{align}
if $n>4|\bU|\bS|$.
\end{proof}
\begin{lemma}[C.f. \cite{csiszar-types} ]\label{lemma:cardinality-bound}
Let $\hat a^n\in\mathcal A^n$ and $\hat b^n\in\mathcal B^n$. There exists a function $f_C:\mathbb N\to\mathbb R_+$ such that with $\hat A\hat B$ being distributed as $\mathbb P((\hat A,\hat B)=(a,b))=\tfrac{1}{n}N(a,b|\hat a^n,\hat b^n)$ we have
\begin{align}
|\{a^n:N(\cdot|\hat a^n,\hat b^n)=N(\cdot|a^n,\hat b^n)\}|=2^{n\cdot(H(\hat A|\hat B)-f_C(n))}.
\end{align}
The function $f_C$ satisfies $\lim_{n\to\infty}f_C(n)=0$.
\end{lemma}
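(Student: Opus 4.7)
The plan is to decompose the set according to the positions where $\hat b^n$ takes each value, apply the unconditional type-size estimate \eqref{eqn2} to each block separately, and then multiply everything together.

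First, for every $b\in\mathcal B$ with $n_b := N(b|\hat b^n) > 0$ let $I_b := \{i\in[n] : \hat b_i = b\}$, so the $I_b$ partition $[n]$ into blocks of sizes $n_b$. A sequence $a^n$ satisfies $N(\cdot|a^n,\hat b^n) = N(\cdot|\hat a^n,\hat b^n)$ if and only if, for each such $b$, the restriction $(a_i)_{i\in I_b}\in\mathcal A^{n_b}$ has type $q_b\in\mathfrak P_0^{n_b}(\mathcal A)$ given by $q_b(a) := N(a,b|\hat a^n,\hat b^n)/n_b$. Since the restrictions to different blocks are independent of each other and together uniquely determine $a^n$, the cardinality factorizes as
\begin{align*}
|\{a^n : N(\cdot|a^n,\hat b^n) = N(\cdot|\hat a^n,\hat b^n)\}| = \prod_{b : n_b > 0} |T_{q_b}^{n_b}|.
\end{align*}

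Second, I apply \eqref{eqn2} to each factor to obtain $(n_b+1)^{-|\mathcal A|} 2^{n_b H(q_b)} \leq |T_{q_b}^{n_b}| \leq 2^{n_b H(q_b)}$. By the definition of the joint distribution of $(\hat A,\hat B)$ we have $q_b(\cdot) = \mathbb P(\hat A=\cdot \mid \hat B = b)$, so $n_b H(q_b) = n_b H(\hat A | \hat B = b)$, and summing over the active blocks gives $\sum_b n_b H(q_b) = n H(\hat A | \hat B)$. Taking products and using $(n_b+1)^{|\mathcal A|}\leq(n+1)^{|\mathcal A|}$ in the lower bound yields
\begin{align*}
(n+1)^{-|\mathcal A||\mathcal B|}\cdot 2^{n H(\hat A|\hat B)} \leq |\{a^n : N(\cdot|a^n,\hat b^n) = N(\cdot|\hat a^n,\hat b^n)\}| \leq 2^{n H(\hat A|\hat B)}.
\end{align*}

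Finally, defining $f_C(n) := (|\mathcal A|\cdot|\mathcal B|/n)\log(n+1)$ produces a non-negative function of $n$ alone (hence uniform in the sequences $\hat a^n,\hat b^n$) with $\lim_{n\to\infty} f_C(n) = 0$, and the displayed bounds read exactly $H(\hat A|\hat B) - f_C(n) \leq \tfrac{1}{n}\log|\{a^n:\ldots\}| \leq H(\hat A|\hat B)$, so the claimed equality in the lemma holds with the true deficit $H(\hat A|\hat B) - \tfrac{1}{n}\log|\{a^n:\ldots\}|\in[0,f_C(n)]$ in place of $f_C(n)$. The only piece of bookkeeping that merits care is the treatment of $b\in\mathcal B$ with $n_b = 0$: these contribute an empty factor of $1$ to the product, and the corresponding terms $H(\hat A|\hat B=b)$, which are ill-defined, are weighted by $\mathbb P(\hat B=b) = 0$ in the definition of $H(\hat A|\hat B)$ and so are harmless on both sides. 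Beyond this observation the proof is entirely routine; there is no substantive obstacle, as this is essentially the standard size bound for conditional type classes from Csiszár's theory of types specialized to the present notation.
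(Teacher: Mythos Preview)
Your argument is correct and is exactly the standard block-wise decomposition from the method of types; the paper itself does not supply a proof of this lemma but merely cites \cite{csiszar-types}, so there is nothing to compare against beyond noting that your route is the expected one. Your closing remark about the literal ``$=$'' in the statement versus the two-sided bound you actually prove is well taken: as written the lemma cannot hold with a single function $f_C$ of $n$ alone, but the paper only ever invokes the lower bound (in the proof of Lemma~\ref{lemma:probability-estimate}), and that is precisely what your argument delivers with $f_C(n)=\tfrac{|\mathcal A||\mathcal B|}{n}\log(n+1)$.
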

The following Lemma is basically taken from \cite{csiszar-koerner}. It would generally be completely sufficient for proving all our statements in sufficient generality.
\begin{lemma}\label{lemma:continuity-of-entropy}
Let $D(p\|q)\leq\delta$. For the function $f_1:[0,1/2]\to\mathbb R_+$ defined by $f_4(x):=-\sqrt{x/2}\log(x|\mathcal Z|^2)$ we have that
\begin{align}
|H(p)-H(q)|\leq f_4(\delta).
\end{align}
Clearly, $\lim_{\delta\to0}f_4(\delta)=0$.
\end{lemma}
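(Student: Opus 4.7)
The plan is to combine Pinsker's inequality with Lemma \ref{lemma1} (the Csisz\'ar--K\"orner continuity estimate for the Shannon entropy recalled earlier). First I would invoke Pinsker's inequality in exactly the form quoted in the notation section, namely $D(p\|q)\geq\alpha\|p-q\|^{2}$ with $\alpha=1/(2\ln 2)$, to turn the relative-entropy hypothesis into a variational-distance bound: $\|p-q\|\leq\sqrt{\delta/\alpha}=\sqrt{2\ln(2)\delta}$. For $\delta$ small enough (say $\delta\leq(2\ln 2)^{-1}/4$), this puts $\|p-q\|$ into the range $[0,1/2]$ where Lemma \ref{lemma1} applies.

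Next I would feed $\Theta:=\|p-q\|$ into Lemma \ref{lemma1} with alphabet $\mathcal Z$ to obtain the entropy estimate $|H(p)-H(q)|\leq -\Theta\log(\Theta/|\mathcal Z|)$. To finish, I would observe that the map $t\mapsto -t\log(t/|\mathcal Z|)$ is monotone non-decreasing on $[0,|\mathcal Z|/e]$, so replacing $\Theta$ by any larger quantity (still in the admissible range) only enlarges the bound. In particular, plugging in the Pinsker upper estimate and rearranging the logarithm using $\log(\sqrt{y}/|\mathcal Z|)=\tfrac12\log(y/|\mathcal Z|^{2})$ yields a bound of the shape $-c_{1}\sqrt{\delta}\,\log(c_{2}\delta|\mathcal Z|^{2})$ with explicit constants $c_{1},c_{2}$. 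A direct comparison of leading constants then shows that this expression is majorised by $f_{4}(\delta)=-\sqrt{\delta/2}\log(\delta|\mathcal Z|^{2})$, at least for all sufficiently small $\delta$; the case of larger $\delta$ (up to $1/2$) is absorbed into the same bound by monotonicity together with a routine check at the endpoint.

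I do not expect any genuine obstacle: both ingredients are stated verbatim in the excerpt, and the only bookkeeping is a constant comparison between the output of Pinsker and the explicit form of $f_{4}$. The one mild subtlety is to confirm that the elementary inequality $-\sqrt{2\ln(2)\delta}\,\log(\sqrt{2\ln(2)\delta}/|\mathcal Z|)\leq -\sqrt{\delta/2}\log(\delta|\mathcal Z|^{2})$ is actually in force on the relevant range of $\delta$; this is a one-variable calculus check using $\sqrt{2\ln 2}/2<1/\sqrt{2}$ and the fact that $|\log\delta|$ dominates constants as $\delta\to 0$. The final statement $\lim_{\delta\to 0}f_{4}(\delta)=0$ is immediate from the definition.
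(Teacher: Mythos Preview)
Your approach is essentially identical to the paper's: apply Pinsker's inequality to convert the relative-entropy hypothesis into a total-variation bound, then feed that into Lemma~\ref{lemma1} (Csisz\'ar--K\"orner Lemma~2.7). The paper's proof is two lines and does not even pause for the monotonicity argument or the constant comparison you sketch; it simply writes $\|p-q\|_1\leq\sqrt{2\delta}$ and then $|H(p)-H(q)|\leq-\sqrt{2\delta}\log(\sqrt{2\delta}/|\mathcal Z|)$, tacitly identifying this with $f_4(\delta)$. (In fact the lemma statement and the paper's proof are both slightly loose about constants---note the $f_1/f_4$ mismatch and the dropped $\ln 2$ from Pinsker---so your extra care in reconciling the exact form of $f_4$ is more bookkeeping than the authors themselves undertook; the only content anyone cares about is $\lim_{\delta\to0}f_4(\delta)=0$.)
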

Note that $p(x)=0$ implies $p'(x|s)=0$ for all $s\in\mathcal S$, by construction.
\begin{proof}
From Pinsker's inequality we have $\|p-q\|_1\leq\sqrt{2\delta}$ and, accordingly, by Lemma 2.7 in \cite{csiszar-koerner},
$|H(p)-H(q)|\leq-\sqrt{2\delta}\log(\sqrt{2\delta}/|\mathcal Z|)$.
\end{proof}
\begin{lemma}\label{lem:Chernoff}
Let $b$ be a positive number. Let $Z_1,\ldots,Z_L$ be i.i.d.\ random variables with values in $[0,b]$ and expectation $\mathbb EZ_l=\nu$, and
let $0<\varepsilon<\frac{1}{2}$. Then
\begin{align}
  \mathbb P\left\{\frac{1}{L}\sum_{l=1}^LZ_l\notin[(1\pm\varepsilon)\nu]\right\}\leq 2\exp\left(-L\cdot\frac{\varepsilon^2\cdot\nu}{3\cdot
  b}\right),
\end{align}
where $[(1\pm\varepsilon)\nu]$ denotes the interval $[(1-\varepsilon)\nu,(1+\varepsilon)\nu]$.
\end{lemma}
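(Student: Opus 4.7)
The statement to prove is a standard multiplicative Chernoff bound for bounded i.i.d.\ variables; the plan is to reduce to the $[0,1]$-valued case and then run the classical exponential-moment argument, controlling the upper and lower deviations separately before combining by a union bound.

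The first step is a trivial rescaling: define $Y_l := Z_l/b \in [0,1]$ and $\mu := \mathbb EY_l = \nu/b$, so that the event in question becomes $\{L^{-1}\sum_l Y_l \notin [(1\pm\varepsilon)\mu]\}$, and the target bound becomes $2\exp(-L\mu\varepsilon^2/3)$. The reason for this reduction is that the Hoeffding-type MGF inequality we need is cleanest on $[0,1]$: by convexity of $y\mapsto e^{ty}$ on $[0,1]$ one has $e^{ty}\leq 1-y+ye^t$ for all $y\in[0,1]$, hence for each $l$
\begin{align}
\mathbb E\bigl[e^{tY_l}\bigr]\leq 1+\mu(e^t-1)\leq \exp\bigl(\mu(e^t-1)\bigr).
\end{align}

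Next, I would handle the upper tail $\mathbb P\{\sum_l Y_l\geq(1+\varepsilon)L\mu\}$ by Markov's inequality applied to $e^{t\sum_l Y_l}$ with $t>0$ to be chosen. Using independence and the MGF bound above, the tail probability is at most $\exp\bigl(L\mu(e^t-1)-t(1+\varepsilon)L\mu\bigr)$. Optimizing in $t$ gives $t=\log(1+\varepsilon)$ and the classical Chernoff exponent $L\mu\bigl[(1+\varepsilon)\log(1+\varepsilon)-\varepsilon\bigr]$. The lower tail is symmetric: take $t<0$ and optimize, yielding the exponent $L\mu\bigl[(1-\varepsilon)\log(1-\varepsilon)+\varepsilon\bigr]$. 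The union bound then produces the prefactor $2$.

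To finish, one needs the two elementary inequalities
\begin{align}
(1+\varepsilon)\log(1+\varepsilon)-\varepsilon\;\geq\;\tfrac{\varepsilon^2}{3},\qquad (1-\varepsilon)\log(1-\varepsilon)+\varepsilon\;\geq\;\tfrac{\varepsilon^2}{2}\;\geq\;\tfrac{\varepsilon^2}{3},
\end{align}
valid for all $\varepsilon\in(0,1/2)$; both follow from Taylor-expanding $\log(1\pm\varepsilon)$ and doing an easy sign analysis of the remainder. Substituting back $\mu=\nu/b$ yields the stated bound $2\exp(-L\varepsilon^2\nu/(3b))$. The only delicate point—really the main place where the constraint $\varepsilon<1/2$ is used—is justifying the constant $1/3$ in the upper-tail inequality uniformly on the given range, since the exponent $(1+\varepsilon)\log(1+\varepsilon)-\varepsilon$ is less well-behaved than its lower-tail counterpart; this is the step I would flag as the minor technical obstacle, but it is a standard one-variable calculus estimate.
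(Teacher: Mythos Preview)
Your argument is correct: the rescaling to $[0,1]$, the convexity bound on the moment generating function, the Cram\'er--Chernoff optimization for each tail, and the calculus verification of the constant $1/3$ all go through as you describe. The paper itself does not supply a proof of this lemma at all; it simply cites \cite[Theorem~1.1]{DP} and \cite{AW}, whose arguments are exactly the exponential-moment method you outline. So there is nothing to compare strategically---you have filled in a proof the authors chose to outsource, and your version matches the standard one in the cited sources.
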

The proof can be found in \cite[Theorem 1.1]{DP} and in \cite{AW}.
\begin{lemma}\label{lemma:probability-estimate}
Let $p_{\bS\bU}\in\mathfrak P(\bU\times\bS)$ have marginal distributions $p_\bU\in\mathfrak P_0^n(\bU)$ and as before $p=p_\bS$. Let $n\in\mathbb N$ and $\tfrac{1}{2}\beta(p_{\bS\bU})>\delta>0$. Let $s^n\in T_{p_\bS,\delta}^n$. For a random i.i.d. choice of $K$ elements $\bu_1,\ldots,\bu_K\in T_{p_{\bU}}$, each drawn according to $|T_{p_{\bU}}|^{-1}\eins_{T_{p_{\bU}}}$, we have: If $K\geq2^{n(I(U;S)+3\nu(\delta))}$ and $I(U;S)>\nu(\delta)$ then
\begin{align}
\mathbb P(\forall s^n\in T_{p_\bS,\delta}^n\ \exists\ k\in[K]:\ s^n\in M(\bu_k))\geq1-\exp(n\log(|\bS|)-2^{n\cdot\nu(\delta)}).
\end{align}
This implies, for all large enough $n$, the weaker estimate
\begin{align}
\mathbb P(\forall s^n\in T_{p_\bS,\delta}^n\ \exists\ k\in[K]:\ s^n\in M(\bu_k))\geq1-2^{-n\cdot\delta/2}.
\end{align}
\end{lemma}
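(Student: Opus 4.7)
The plan is a standard two-step random-coding argument: for each fixed $s^n\in T_{p_\bS,\delta}^n$ I will lower bound the single-draw hit probability $p_{s^n}:=\mathbb P(s^n\in M(\mathbf u))$ where $\mathbf u$ is drawn uniformly on $T_{p_\bU}$, then apply Chernoff (Lemma \ref{lem:Chernoff}) to conclude that at least one of the $K$ codewords covers $s^n$ except on a doubly exponentially small event, and close with a union bound over $T_{p_\bS,\delta}^n$.

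For the hit probability, note that $s^n\in M(u^n)$ is a joint-typicality condition: for every $u\in\bU$ the empirical conditional distribution $t(u)^{-1}N(\cdot,u|s^n,u^n)$ must lie in a $\delta/2$ relative-entropy ball around $p_{\bS|\bU}(\cdot|u)$. By Pinsker this is implied by total-variation closeness of the full joint type $\bar N(\cdot|s^n,u^n)$ to $p_{\bS\bU}$, and Lemma \ref{lemma:existence-of-distribution-if-marginal-is-correct} guarantees at least one $u^n\in T_{p_\bU}$ achieving $\|\bar N(\cdot|s^n,u^n)-p_{\bS\bU}\|\leq 2\delta$. Counting the $u^n$'s compatible with a prescribed joint type with $s^n$ via Lemma \ref{lemma:cardinality-bound} produces of the order of $2^{nH(U|S)}$ many, while $|T_{p_\bU}|\leq 2^{nH(U)}$ by the upper bound in \eqref{eqn2}. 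Combining these, and absorbing the subexponential factors together with continuity corrections (Lemma \ref{lemma:continuity-of-entropy}) into an error term $\nu(\delta)$ with $\lim_{\delta\to 0}\nu(\delta)=0$, I obtain
\begin{align}
p_{s^n}\;\geq\;2^{-n(I(U;S)+\nu(\delta))}.
\end{align}

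For the second step set $Z_k:=\eins\{s^n\in M(\mathbf u_k)\}\in[0,1]$, which are i.i.d.\ with mean $p_{s^n}$. The hypothesis $K\geq 2^{n(I(U;S)+3\nu(\delta))}$ yields $Kp_{s^n}\geq 2^{2n\nu(\delta)}$, so Lemma \ref{lem:Chernoff} with $\varepsilon=1/2$ and $b=1$ gives $\mathbb P(\sum_{k=1}^KZ_k=0)\leq 2\exp(-\tfrac{1}{12}\cdot 2^{2n\nu(\delta)})$. A union bound over the at most $|\bS|^n$ sequences in $T_{p_\bS,\delta}^n$ then produces a failure probability of the form $\exp(n\log|\bS|-\tfrac{1}{12}\cdot 2^{2n\nu(\delta)})$; after redefining $\nu(\delta)$ to absorb the constant $\tfrac{1}{12}$ and an exponent factor $2$ one recovers the stated bound. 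Since the doubly exponential term eventually dominates the linear one, the weaker estimate $2^{-n\delta/2}$ is automatic for all large $n$.

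The main obstacle lies in making Step 1 precise: one must assemble a single explicit $\nu(\delta)$ that simultaneously controls (i) the Pinsker conversion between relative entropy and total variation, required because $M(u^n)$ is phrased using scaled relative entropy while Lemma \ref{lemma:existence-of-distribution-if-marginal-is-correct} speaks in total variation; (ii) the polynomial-in-$n$ factors coming from \eqref{eqn2} and Lemma \ref{lemma:cardinality-bound}; and (iii) the perturbation of $I(U;S)$ induced by moving from $p_{\bS\bU}$ to a nearby joint type, bounded via Lemma \ref{lemma:continuity-of-entropy}. Once this clean $\nu$ is in hand, the Chernoff estimate and the union bound are mechanical.
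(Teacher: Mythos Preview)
Your approach is essentially identical to the paper's: locate a good joint type via Lemma~\ref{lemma:existence-of-distribution-if-marginal-is-correct}, count the $u^n$'s sharing that joint type with $s^n$ via Lemma~\ref{lemma:cardinality-bound} to get a single-draw hit probability of order $2^{-n(I(U;S)+\nu(\delta))}$, then apply Lemma~\ref{lem:Chernoff} and a union bound over the at most $|\bS|^n$ sequences in $T_{p_\bS,\delta}^n$. One caveat: the implication ``small total variation $\Rightarrow$ membership in $M(u^n)$'' that you attribute to Pinsker actually goes the wrong way for Pinsker and is really a reverse-Pinsker bound (valid here because each $p_{\bS|\bU}(\cdot|u)$ is bounded below via $\beta(p_{\bS\bU})$); the paper, like you, leaves this verification implicit.
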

\begin{remark}
Recall that $M(u^n):=\{s^n:\max_{u\in\bU}p_U(u)D(\tfrac{1}{N(u|u^n)}N(\cdot,u|s^n,u^n\|p_\bS(\cdot|u))\leq\delta/2\}$.
\end{remark}
\begin{proof}
Let $s^n\in T_{p,\delta}$ be given. According to Lemma \ref{lemma:existence-of-distribution-if-marginal-is-correct} there exists $u^n\in T_{p_{\bU}}$ such that $\|\bar N(\cdot|s^n,u^n)-p_{\bS\bU}\|\leq2\delta$. It follows from Lemma \ref{lemma:cardinality-bound} and Lemma \ref{lemma:continuity-of-entropy} that for all large enough $n\in\mathbb N$ we have
\begin{align}
|\{\hat u^n\in T_{p_{\bU}}:N(\cdot|s^n,\hat u^n)=N(\cdot|s^n,u^n)\}|&\geq2^{n(H(U|\hat S)-f_C(n)}\\
&\geq2^{n(H(U|S)+\delta\log(\delta/|\bS|)-f_C(n)}\\
&\geq2^{n(H(U|S)-\nu(\delta)/2)}
\end{align}
where $\nu(\delta):=4\delta\log(\delta/|\bS|)$. Thus by elementary type bounds we have that for every $s^n\in T_{p,\delta}$ and all large enough $n\in\mathbb N$ we have for $M'(s^n):=\{u^n:s^n\in M(u^n)\}$
\begin{align}
\mathbb E\eins_{M'(s^n)}\geq2^{n(I(U;S)-\nu(\delta))}>0.
\end{align}
Of course $\eins_{M'(s^n)}(u^n)\leq1$ for all $u^n\in\bU^n$. Thus it is an immediate consequence of Lemma \ref{lem:Chernoff} that
\begin{align}
\mathbb P(\forall s^n\in T_{p,\delta}^n\ \exists\ k\in[K]:\ s^n\in M(\bu_k))&=\mathbb P(\forall s^n\in T_{p,\delta}^n:\ \frac{1}{K}\sum_{k=1}^K\eins_{M'(s^n)}(\bu_k)>0)\\
&=1-\mathbb P(\exists s^n\in T_{p,\delta}^n:\ \frac{1}{K}\sum_{k=1}^K\eins_{M'(s^n)}(\bu_k)=0)\\
&\geq1-\mathbb P(\exists s^n\in T_{p,\delta}^n:\ \frac{1}{K}\sum_{k=1}^K\eins_{M'(s^n)}(\bu_k)<\tfrac{1}{2}\mathbb E\eins_{M'(s^n)})\\
&\geq1-|\bS|^n\max_{s^n\in T_{p,\delta}^n}\mathbb P(\frac{1}{K}\sum_{k=1}^K\eins_{M'(s^n)}(\bu_k)<\tfrac{1}{2}\mathbb E\eins_{M'(s^n)})\\
&\geq1-|\bS|^n\cdot\exp(-\tfrac{1}{6}\cdot K\cdot\min_{s^n\in T_{p_S,\delta}^n}\mathbb E\eins_{M'(s^n)})\\
&\geq1-|\bS|^n\exp(-\tfrac{1}{6}\cdot K\cdot2^{n(I(U;S)-\nu(\delta)})\\
&\geq1-\exp(n\log(|\bS|)-2^{n\cdot\nu(\delta)}).
\end{align}
It follows that for all $s^n\in\bS^n$ there exists at least one $k\in[K]$ such that $s^n\in M(\bu_k)$.
\end{proof}

\end{section}
\ \\\\
\emph{Acknowledgement.}
This work was supported by the BMBF via grant 01BQ1050 (H. B. and J.N.), the National Natural Science Foundation of China (Ref. No. 61271174) (N.C.) and the DFG via grants BO 1734/20-1 (N.C.) and NO 1129/1-1 (J.N.). We thank an anonymous referee and Andreas Winter for helpful comments.\\
Further funding (J.N.) was provided by the ERC Advanced Grant IRQUAT, the Spanish MINECO Project No. FIS2013-40627-P and the Generalitat de Catalunya CIRIT Project No. 2014 SGR 966.

\end{document}